\newtheorem{theorem}{Theorem}
\newtheorem{lemma}{Lemma}
\newtheorem{definition}{Definition}
\newcommand{\nn}{\nonumber}
\def\a{\alpha}
\def\b{\beta}
\def\d{\delta}
\def\la{\lambda}
\def\m{\mu}
\def\n{\nu}
\def\r{\rho}
\def\s{\sigma}
\def\t{\tau}
\def\p{\psi}
\def\La{\Lambda}
\def\P{\Psi}
\def\<{\langle}
\def\>{\rangle}
\def\Tr{{\rm Tr}}
\def\mN{{\mathcal{N}}}
\def\mH{{\mathcal{H}}}
\def\bI{{\mathbb{I}}}
\def\ha{{\hat{a}}}
\begin{document}

	\title{Taming identical particles for discerning the genuine nonlocality}
	
	\author[1]{Seungbeom Chin \thanks{sbthesy@gmail.com}}
	\author[1]{Jung-Hoon Chun}
\affil[1]{Department of Electrical and Computer Engineering, Sungkyunkwan University, Suwon 16419, Korea}

\maketitle

	\begin{abstract}
This work provides a comprehensive approach to analyze the entanglement between subsystems generated by identical particles, based on the symmetric/exterior algebra (SEA) and microcausality.
Our method amends the no-labeling approach (NLA) to quantify any type of identical particles' entanglement, especially fermions with the parity superselection rule.
We can analyze the non-local properties of identical particles' states in a fundamentally equivalent way to those for non-identical particles, which is achieved by the factorizability of the total Hilbert space of identical particles.
This formal correspondence between identical and non-identical particle systems turns out to be useful for quantifying the non-locality generated by identical particles, such as the maximal CHSH inequality violation and the GHJW theorem of identical particles. 
\end{abstract}

\section{Introduction} \label{intro}

The particle identity is one of the essential quantum features, which is formally represented as the exchange symmetry between states of particles in the first quantization language.
Nevertheless, 
the notion of entanglement mostly has developed based on the presupposition that non-identical particles distribute over distinguishable sectors.
This is mostly by a puzzle that arises when considering the entanglement and particle identity simultaneously~\cite{peres2006quantum}. The exchange symmetry of identical particles evokes superposed forms of wave functions, which seems mathematically equivalent to entangled states. Hence, it becomes subtle to discriminate physically extractable entanglement in multipartite systems of identical particles.

There have been several attempts to quantify the physical entanglement of identical particles ~\cite{schliemann2001j,schliemann2001quantum, ghirardi2002entanglement, benatti2011entanglement,benatti2012bipartite, balachandran2013entanglement,balachandran2013algebraic,franco2016quantum,franco2018indistinguishability, compagno2018dealing, chin2019entanglement} that employ various techniques and definitions, such as Slater number~\cite{ghirardi2002entanglement} or subalgebra restriction~\cite{balachandran2013entanglement,balachandran2013algebraic}.  Also, a recently introduced method, which the authors named no-labeling approach (NLA)~\cite{franco2016quantum,franco2018indistinguishability,compagno2018dealing}, introduced a seemingly unorthodox formalism to compute genuine non-local properties of identical particles. NLA has drawn strong attention since it can handily quantify the entanglement of identical particles with well-known measures such as entropy, concurrence~\cite{franco2016quantum,franco2018indistinguishability} and 
Schmidt number~\cite{sciara2017universality}.

Even if not manifestly mentioned in the above works,  NLA can be understood as the description of identical particles with symmetric and exterior algebras (abbreviated as SEA in this work for  convenience) in the Fock space (see, e.g.,~\cite{fock}). However, as we will discuss in this work,  when NLA is rigorously discussed in the formalism of SEA, one can find that the algebraic relations defined in NLA are incomplete and even incorrect to explain the nonlocality of identical particles in general, especially when the particles condensate in each subsystem. 

Here, we provide an extensive algebraic definition to quantify the nonlocality of identical particles in SEA for the most genearl case. This linear algebra, which we name the \emph{local inner product}, is obtained by imposing the restrictions from microcausality to the Fock space of identical particles.
By defining the local inner product, we can present rigorous separability conditions of identical particles in general and also obtain the rigorous definition for the partial trace of identical particles. As a representative case, we analyze the entanglement of bipartite fermions that preserves the parity superselection rule (PSSR). By evaluating the generation process of entanglement from spatially coherent identical particles, we show that \emph{fermions extract a larger amount of entanglement from the same coherence than bosons do}, which is by the exchange antisymmetry that represents the exclusion property of fermions.

Another crucial property that we can achieve from our approach is that \emph{the total Hilbert space of identical particles is factorizable as non-identical ones}, which is proved with the concept of quantum causality~\cite{navascues2012physical}. Many works that deals with many particle systems in the second quantization language (2QL) assume this factorizability (see, e.g.,~\cite{schuch2004nonlocal,schuch2004quantum}), which seems still puzzling from the viewpoint of Fock space. Here we rigorously resolve this problem. With the Hilbert space factorizability, one can eventually deal with the states of identical particles in the equivalent manner to that of nonidentical particles. As an example, we show that Gisin–Hughston–Jozsa–Wootters (GHJW)  theorem~\cite{gisin1984,hughston1993complete} can be proved to hold directly in the formalism.  The maximal violation of the Clauser-Horne-Shimony-Holt (CHSH) inequality from the superposition of vacuum and four-fermion state is also computed.
%Consequently, we can state that \emph{the SEA-based formalism under the restriction of microcausality is optimized to analyze the non-local properties of identical particles}, for it can employ well-developed techniques for non-identical particles' nonlocality. 

Our work is organized as follows: Section~\ref{prel} presents theoretical preliminaries for the quantification of identical particle nonlocality. The concept of the local inner product is defined. Section~\ref{ferm} applies the results of Section~\ref{prel} to the multipartite fermion systems. Section~\ref{factorization} proves the total Hilbert space of identical particles is factorized according to the locality of subsystems and shows that a GHJW theorem and CHSH inequality violation can be obtained for identical particles with the factorization. Section~\ref{discussions} summarizes our results and discusses some possible future works.
%In Sec.~\ref{math}, we present the mathematical description of identical particles in SEA. 
%In Sec.~\ref{particletomode}, we briefly explain the difference of the particle entanglement from mode entanglement and how the particle entanglement is extracted as the form of mode entanglement. 
%In Sec.~\ref{ssr}, we analyze the quantitative features of the SSR-preserving entanglements in detail. In Sec.~\ref{taming}, we show how the Hilbert space of identical particles in a bipartite system can be factorized with the concept of quantum causality defined in Ref.~\cite{navascues2012physical}. As applications of our theoretical results, the GHJW theorem and Bell inequality maximal violation are studied.  Sec.~\ref{discussions} summarizes our results and discusses some possible future works.

\section{Description of identical particles with symmetric/exterior algebra (SEA) and the local inner product}\label{prel}

In this section, we provide mathematical and physical preliminaries for quantifying the entanglement of identical particles. We amend the algebraic relations defined in NLA for analyzing the entanglement of identical particles in general. We first explain the SEA formalism and the role of microcausality in the entanglement of identical particles. SEA and microcausality are combined to introduce a linear algebra, which we dub the \emph{local inner product},  to define the partial trace of identical particles that can be applied to the most general cases. 

\subsection{Identical particles in Fock spaces}\label{math}

The second quantization language (2QL) has usually been considered the best way to treat a set of identical particles in many-body quantum systems. However, when it comes to quantum information processing, 2QL needs quite abstract algebraic methods to define the separability that is not directly  related to Hilbert space tensor product structure~\cite{ benatti2011entanglement,benatti2012bipartite,balachandran2013entanglement,balachandran2013algebraic} and does not provide a direct formalism for the entanglement resource theory.

The no-labeling approach (NLA, \cite{franco2016quantum,franco2018indistinguishability,compagno2018dealing}) is introduced as a midway language\footnote{It is a ``midway language'' because, on the one hand, it resembles the character of 1QL to denote states of particles directly, and on the other hand, it also resembles the character of 2QL to inherently discard particle pseudo-labels.}  to analyze the entanglement of identical particles in a more concrete and intuitive manner. From the mathematical viewpoint, NLA is based on the SEA formalism of identical particles~\cite{fock}. Unlike the first quantization language (1QL), SEA directly reveals the exchange symmetry among particles.
1QL achieves the exchange symmetries of $N$ identical particles by superposing $N$-particle wave functions so that they become symmetric (for bosons) or antisymmetric (for fermions) under the switches of particle pseudo-labels. On the other hand, SEA includes notations that explicitly denote the symmetries, which are the 
\emph{symmetric product $\vee$ and exterior (or antisymmetric) product $\wedge$}~\cite{fock, compagno2018dealing,chin2019reduced}. 
%The language itself is well-known in the mathematical physics community (see, e.g., ~\cite{fock}), though the advantage of applying it to entanglement is not well-recognized enough. 

%Some recent works~\cite{franco2016quantum,franco2018indistinguishability,compagno2018dealing} showed that it can be a useful tool to define the entanglement of identical particles and fundamental measures such as entanglement entropy. However, the mathematical explanations in those works are incomplete. 

%To thoroughly generalize the discussion to arbitrary number of identical particles in multipartite subsystems, a rigorous description of 1.5QL must precede.  

%The behaviors of identical particles are best represented in the Fock spaces,  which are algebraic constructions of single Hilbert space $\mH$ according the the symmetry of the particles. While $N$ distinguishable particles reside in the direct $N$ product space of $\mH$ ,i.e., $\mH^{\otimes N} \equiv \underbrace{\mH\otimes  \cdots \otimes \mH}_{\textrm{\tiny $N$}}$, identical particles must be in quotient spaces of $\mH^{\otimes N}$ that are restricted by the particle exchange symmetry, i.e. the Fock spaces.  

%However, the definition of NSA is incomplete because it lacks the clarification of vacuum in the algebraic relations. (\purple{explain in more detail})

Here, we summarize a mathematical description of identical particles and some crucial algebraic results in SEA (see Ref.~\cite{fock} for a more detailed explanation). Our discussion reveals that the algebraic definitions in NLA for the partial trace of identical particles need correction for a consistent quantification of the entanglement.

We suppose a particle has the state $\P_i = (\p_i,s_i)$ where $\p_i$ is the spatial wavefunction and $s_i$ contains all the possible internal degrees of freedom. The total wavefunction of $N$ identical particles is expressed with the following definition:
\begin{definition}
 (Symmetric and exterior tensor products) 
For a Hilbert space $\mH$ (dim$\mH=d$) 
and state vectors $\{|\P_i\> \}_{i=1}^N\in \mH$ with $|\P_i\> = \sum_{a=1}^{d}\P_i^a|a\>$ and $\P_i^a\in \mathbb{C}$, the symmetric tensor product $\vee$ is defined as
\begin{align}\label{bosons}
|\P_1,\cdots,\P_N\>&\equiv
 |\P_1\>\vee\cdots \vee |\P_N\> \nn \\
 &= \frac{1}{\mN} \sum_{\s\in S_N} |\P_{\s(1)}\>\otimes \cdots \otimes |\P_{\s(N)}\>
\end{align} where $\mN$ is the normalization factor and $S_N$ is the $N$ permutation group. And the exterior (or antisymmetric) tensor product $\wedge$ is defined as
\begin{align}\label{fermion}
|\P_1,\cdots,\P_N\> &\equiv
 |\P_1\>\wedge\cdots \wedge |\P_N\>\nn \\
 &= \frac{1}{\mN} \sum_{\s\in S_N} (-1)^\s |\P_{\s(1)}\>\otimes \cdots \otimes |\P_{\s(N)}\>
\end{align} where $(-1)^\s$ is the signature of $\s$. 
\end{definition}
%%%%%%%%%%%%%%%%%%%%%%%%%%%%%%%%%%%%%%%%%%%%%%%%%%%%%%%%%%%%%%%%%%%%
Then, Eqs.~\eqref{bosons} and \eqref{fermion} correspond to the wavefunctions of $N$ bosons and fermions respectively~\cite{fock}. 
One can notice that the above definition directly connects states written in 1QL to those in SEA (see also Ref.~\cite{chin2019reduced}). 
A closed subspace of $\mH^{\otimes N}$ generated by $|\P_1\>\vee\cdots \vee |\P_N\> $ is denoted by $\mH^{\vee N}$
%and named the $N$-fold symmetric tensor product of $\mH$, 
in which $N$ bosons reside, and a closed subspace of $\mH^{\otimes N}$ generated by $|\P_1\>\wedge\cdots \wedge |\P_N\> $ is denoted by $\mH^{\wedge N}$
%and named the $N$-fold exterior tensor product of $\mH$, 
in which $N$ femions reside. These two subspaces compose
\emph{Fock spaces}, which are algebraic constructions of single Hilbert space for unfixed number of identical particles.
 The bosonic Fock space over $\mH$ is defined as
$F_b(\mH) =\bigoplus_{N=0}^\infty \mH^{\vee N}$ and 
 the fermionic Fock space as $F_f(\mH) =\bigoplus_{N=0}^\infty\mH^{\wedge N}$ with the definition $\mH^0=\mathbb{C}$.
 $\mH^0$ is the Hilbert space  for \emph{the vacuum state} $|vac\>$. We will see that $|vac\>$ plays an important role in the definition of PSSR entanglement for fermions. 
From now on, ``$\otimes_{\pm}$'' will be used when the algebra can be any of the symmetric and exterior tensor products.

Creation and annihilation operators $(\ha^\dagger, \ha)$ are defined in SEA as follows~\cite{fock}:
%These operators plays a crucial role in quantum information processing, for a particle number variation of the total system to subsystems are experienced as particle creation or annihilation from the viewpoint of each subsystem (\purple{more clearly}).
%%%%%%%%%%%%%%%%%%%%%%%%%%%%%%%%%%%%%%%%%%%%%%%%%%%%%%%%%%%%%%%%%%%%
$ $\\

\begin{definition}\label{creationannihilation}
 
  The creation operator $\ha^\dagger(\P)$ from $\mH^{\otimes_{\pm} N}$ to $\mH^{\otimes_{\pm} (N+1)}$ is defined as
  \begin{align}\label{creat}
      \ha^\dagger_\P (|\P_1\>\otimes_\pm\cdots\otimes_\pm  |\P_N\>) = |\P\>\otimes_\pm|\P_1\>\otimes_\pm\cdots \otimes_\pm  |\P_N\>. 
  \end{align}
  The annihilation operator $\ha_\P$ from $\mH^{\otimes_\pm N}$ to $\mH^{\otimes_\pm (N-1)}$ is defined with the concept of the interior product $\cdot$ as
  \begin{align}\label{annih}
     \ha_\P (|\P_1\>\otimes_\pm\cdots\otimes_\pm |\P_N\>) &\equiv \<\P|\cdot  |\P_1\>\otimes_\pm \cdots\otimes_\pm |\P_N\> \nn \\
      &=\sum_{i=1}^N (\pm 1)^{i-1} \<\P|\P_i\>|\P_1\>\otimes_\pm \cdots \otimes_\pm (|\P_i\>) \otimes_\pm\cdots \otimes_\pm |\P_N\>,
  \end{align}  where $(|\P_i\>)$ in the last line means that the state $|\P_i\>$ is absent.  
  
The relations of the above operators to the vacuum state $|vac\>$ are defined as 
\begin{align}\label{vacuum}
    \ha_\P|vac\> = 0, \quad \ha^\dagger_\P|vac\> = |\P\>. 
\end{align}
\end{definition}
  
Note that the commutation relations of creation and annihilation operators are given by
\begin{align}
    [\ha_\P,\ha_\Phi]_{\pm} =0,\quad  [\ha_\P,\ha^\dagger_\Phi]_{\pm} =\<\P|\Phi\>
\end{align} where $+$ ($-$) is the commutator (anticommutator) for bosons (fermions). 
%%%%%%%%%%%%%%%%%%%%%%%%%%%%%%%%%%%%%%%%%%%%%%%%%%%%%%%%%%%%%%%%%%%%
  
It is direct to generalize Eqs.~\eqref{creat} and \eqref{annih} to multi-particle creation and annihilation. For example,
\begin{align}
    \ha^\dagger_{\P'}\ha^\dagger_\P (|\P_1\>\otimes_\pm\cdots\otimes_\pm  |\P_N\>) 
    &= \ha^\dagger_{\P'} (|\P\>\otimes_\pm |\P_1\>\otimes_\pm\cdots\otimes_\pm  |\P_N\>) \nn  \\
    &=|\P'\>\otimes_\pm  |\P\>\otimes_\pm |\P_1\>\otimes_\pm\cdots\otimes_\pm  |\P_N\>,
\end{align} etc.
The $n$ particle creation and annihilation processes correspond to the  the following bilinear maps:
\begin{align}
 &\mH^{\otimes_\pm N}\times \mH^{\otimes_\pm n} \to \mH^{\otimes_\pm (N+n)} \quad(\textrm{creation}) \nn \\
 &\mH^{\otimes_\pm N}\cdot  \mH^{\otimes_\pm n} \to \mH^{\otimes_\pm (N-n)} \quad(\textrm{annihilation})
\end{align} where $\times$ and $\cdot$ denote the outer and interior products. 

Then the transition amplitude from a state $|\P_1,\cdots,\P_N\>$ to $|\Phi_1,\cdots,\Phi_N\>$ can be derived from Definition~\ref{creationannihilation} as 
\begin{align}\label{transition}
\<\Phi_1,\cdots,\Phi_N|\P_1,\cdots,\P_N\> 
&=
 \<\Phi_N|\otimes_\pm \cdots \otimes_\pm \<\Phi_1|\cdot|\P_1 \>\otimes_\pm \cdots \otimes_\pm |\P_N\>\nn \\
&=
\left\{
  \begin{array}{@{}ll@{}}
    \frac{1}{\mN^2}Per[\<\Phi_i|\P_j\>] & \text{for bosons} \\
    \frac{1}{\mN^2}Det[\<\Phi_i|\P_j\>] & \text{for fermions}
  \end{array}\right. 
\end{align}
where $Per$ and $Det$ mean the permanent and determinant of a $N\times N$ matrix with entries $\<\Phi_i|\P_j\>$. 

It is worth emphasizing the difference between the LHS and the RHS of the first equality in Eq.~\eqref{transition}.
The LHS denotes
the transition amplitudes of the states that can be rewritten as $\<vac|\ha_{\Phi_N}\cdots\ha_{\Phi_1}$ $ \ha^\dagger_{\P_1}\cdots \ha^\dagger_{\P_N}  |vac\>$, and the RHS denotes the interior product of the corresponding multilinear tensors. They are equal only when two states have the same number of particles. As a simple example to show that the two algebras are not identical in general, we consider $(N,M)=(1,2)$. Then
\begin{align}
    &\<\Phi_1|\P_1,\P_2\> = \<vac|\ha_{\Phi_1} \ha^\dagger_{\P_1}\ha^\dagger_{\P_2} |vac\>=0,
\end{align}
while
\begin{align}
\<\Phi_1|\cdot|\P_1 \>\otimes_\pm |\P_M\>=\<\Phi_1|\P_1\>|\P_2\> \pm \<\Phi_1|\P_2\>|\P_1\>,
\end{align} which is not zero in general.

One can find the same $dot$ ($\cdot$) product definition  with Definition~\ref{creationannihilation} in NLA~\cite{compagno2018dealing} (without mentioning the role of vacuum states). This algebraic identity supports that NLA is naturally obtained from 1QL with the mathematical formularization of SEA. While Definition~\ref{creationannihilation} just presents the mathematical concept of creation and annihilation operators with SEA in the Fock space, a dot product in NLA is introduced to define the partial trace of identical particles, i.e., a physical operation (the definition of identical particle partial trace in NLA is given in Section~\ref{SPT} of this work). However, the identification of partial trace to the interior product of SEA works only when the particles are bosons and obey the particle number superselection rule (NSSR), which we will discuss in Section~\ref{SPT} after explaining the role of microcausality in nonlocality.
%And in the process of defining the algebras, there happens a notation abusement between $\<\Phi_1,\cdots,\Phi_N|\P_1,\cdots,\P_N\>$ and $\<\Phi_N|\otimes_\pm \cdots \otimes_\pm \<\Phi_1|\cdot|\P_1 \>\otimes_\pm \cdots \otimes_\pm |\P_N\>$.
%Section~\ref{SPT} also provides a rigorous definition for identical particles to obtain the reduced density matrix that signifies the genuine nonlocality. 

%\blue{From the viewpoint of quantum information processing in a multipartite system,  the creation and annihilation operators are used for the connection and disconnection of subsystems, which we will explain in more detail in the next subsection  .}
%One can see that the symmetric and antisymmetric products $\otimes_{\pm}$ of identical particles in the SEA formalism replaces the tensor products of non-identical particles in 1QL.

%%%%%%%%%%%%%%%%%%%%%%%%%%%%%%%%%%%%%%%%%%%%%%%%%%%%%%%%%%%%%%%%%%%%%%%%%%

\subsection{Microcausality and nonlocality}\label{microc}

To obtain an operational framework for utilizable entanglement from identical particles, the concept of spatially localized operations and classical communications (sLOCC) is introduced in Ref.~\cite{franco2018indistinguishability,compagno2018dealing}, which states that the operations for sLOCC occur at restricted spatial regions.
%The two-boson entanglement extraction protocol given in Ref.~\cite{franco2018indistinguishability} presents that sLOCC in particle-based approach can rigorously quantify the physical entanglement of identical particles.
Considering that the spatial regions correspond to local detectors (modes), this description implies that the extracted entanglement of identical particles within the sLOCC framework is the mode entanglement (see Appendix~\ref{mode}). This type of entanglement generation process is similar to the detector-level entanglement introduced in Ref.~\cite{tichy2013entanglement}. Appendix~\ref{particletomode} reviews the concept of particle and mode entanglements of identical particles. 

A principal prerequisite to specify the entanglement of identical particles is to clarify the particles' spatial relation to modes. Therefore, it presumes a spatial computational basis for quantifying entanglement, $\{|X_a\>\}_{a=1}^{P\ge 2} $, where $X_a$ denote individual subsystems according to their spatial location ($\<X_a|X_b\>=\d_{ab}$).
Then any spatial wavefunction $|\p_i\>$ is expressed as $|\p_i\> = \sum_a\p_i^a|X_a\>$.

%Mode entanglement premises distinguishable modes, which is represented as more than two subsystems $\{X_a\}_{a=1}^{P\ge 2}$ that locate in distinguishable places, i.e., $\<X_a|X_b\>=\d_{ab}$.
With this restriction, operators acting on a subsystems $X_a$ is expressed as
$O_{X_a} = \sum_{r,s}O_{X_a}^{rs}|X_a,r\>\<X_a,s|$ ($r$ and $s$ denotes the possible internal degrees of freedom), which directly satisfies 
\begin{align}\label{commutation}
    [\mathcal{O}_{X_a}, \mathcal{O}_{X_b}] = 0, \qquad (a\neq b).
\end{align}
The above commutation relation connects the entanglement of identical particles extracted under sLOCC to those defined based on algebraic methods~\cite{balachandran2013entanglement,balachandran2013algebraic,benatti2012bipartite,benatti2011entanglement} (see, e.g., Proposition 1 of Ref.~\cite{benatti2012bipartite}).
%Indeed, Tsirelson's theorem shows that, in finite dimensional Hilbert space, any quantum theory based on two sets of commuting observables is equivalent to 

The commutation relation Eq.~\eqref{commutation} can be understood as a three-dimensional version of \emph{microcausality}, which means that operators acting on spacelike separated regions commute,
\begin{align}
 [\mathcal{O}_{X_a}(t_a),\mathcal{O}_{X_b}(t_b)] =0 \quad \textrm{if} \quad (X_a-X_b)^2-(t_a-t_b)^2 < 0.
\end{align} Hence, sLOCC implies microcausality with given computational basis according to the spatial distribution of subsystems (detectors, modes). And the \emph{separability conditions of identical particles} emerge from the fixation of a spatial computational basis and the presumption of microcausality. Since the separablity condition of identical particles is not explicitly presented in Refs.~\cite{balachandran2013entanglement,balachandran2013algebraic,benatti2012bipartite,benatti2011entanglement}, we provide that for bosons under the restriction of particle number superselection rule (NSSR)~\cite{wick1952intrinsic,wick1970superselection} in Appendix~\ref{NSSR}.

Another crucial role of microcausality is that the partity superselection rule (PSSR) for fermions~\cite{friis2016reasonable,johansson2016comment,gigena2015entanglement, gigena2017bipartite, amosov2017spectral} is derived from the microcausality, which indicates that the entanglement of identical particles under the restiction of SSR presumes microcausality. 
%It also provides \purple{a criterion that restricts possible orthogonal states of each subsystem}, by which the most general separability condition of a multipartite system is achieved.

%%%%%%%%%%%%%%%%%%%%%%%%%%%%%%%%%%%%%%%%%%%%%%%%%%%%%%%%%%%%%%%%%%%%%%%%%%%%%%%%%%%%%%%%%%%%%%%%%%%%%%%%%%%%%%%%%%

\subsection{The symmetrized partial trace of identical particles in SEA}\label{SPT}

%It is very important to notice that \textbf{the identification of partial trace to the interior product of SEA does not hold in general.} It works only when the particles are bosons and obey the particle number superselection rule (NSSR). Considering that Eq.~\eqref{annih} is purely mathematical and the partial trace includes physical implication, this discrepancy is not very surprising.
 
One of important theoretical contributions of NLA to the entanglement of identical particles is to suggest a concrete computational method to obtain the reduced density matrix of identical particle states~\cite{franco2016quantum,franco2018indistinguishability,compagno2018dealing}. In NLA, the definition of a partial trace of a state is based on the interior (dot) product~\cite{compagno2018dealing} given in Definition~\ref{creationannihilation}.
By supposing that a complete basis of $X_a$ is given by $\{|\Phi^a_q\>\}_q$ with  $\<\Phi^a_p|\Phi^a_q\>=\d_{pq}$, the partial trace over a subsystem $X_a$ of a state $\r = \sum_{k}p_k|\P_k\>\<\P_k|$ ($\sum_k p_k=1$)  is defined with the corresponding identity matrix $\bI_{X_a} = \sum_q|\Phi^a_q\>\<\Phi^a_q|$ as
\begin{align}\label{partialnsa}
\Tr_{X_a}(\r) = \Tr_{X_a}(\r\bI_{X_a}) \equiv  \sum_{q,k}p_k\<\Phi^a_q|\cdot|\P_k\>\<\P_k|\cdot|\Phi^a_q\>,    
\end{align} where the operation $\cdot$ denotes the interior product (Eq.~\eqref{annih}). %Most application of Eq.~\eqref{partialnsa} is on the cases when each identical particle locates at a different subsystem with others~\cite{franco2018indistinguishability,compagno2018dealing,bellomo2017n}. 
 
On the other hand, for a definition of partial trace to be valid, it should satisfy the following conditions: 
\begin{itemize}
    \item C1) When a state of identical particles is local (all particles are in the same subsystem), the partial trace of the state is just the trace and becomes $1$ (a number, not an identity matrix). 
    \item C2) When a pure state is nonlocal and separable, the obtained reduced density matrix becomes a pure state. 
\end{itemize}
And we can directly check that the definition of the partial trace~Eq.~\eqref{partialnsa} does not meet the above conditions in general. Actually, it is a valid definition only when the particles preserve the NSSR restriction. %Appendix~\ref{partialtrace} explains this matter in detail.
%%%%%%%%%%%%%%%%%%%%%%%%%%%%%%%%%%%%%%%%%%%%%%%%%%%%%%%%%%%%%%%%%%%%%%%%%%%%%%%%%%%%%%%%%%%%%%%%%%%%%%%%%%%%%%%%%%%%%%%%%%%%%%%%%%%%%%%%%%%%%%%%%%%%%%%%%%%%
%We set the basis of a subsystem $X_a$ is $\{|\Phi_{q(n)}^a\>\}_q$ where $(n)$ denotes that they are $n$-boson states. To see whether the definitions of partial trace satisfy C1 and C2 for bosons, we will compute the trace of a $X_a$-local state $\sum_r \p^r |\Phi_{r(n)}^a\>$ and the partial trace of a separable state $|\P^N_{sep}\>$ of Eq.~\eqref{bisep.}.

First, we show that the definition of partial trace as Eq.~\eqref{partialnsa} meets C1 and C2 when the particles preserve NSSR. 
For this case, the basis of a subsystem $X_a$ is set to be $\{|\Phi_{q(n)}^a\>\}_q$ ($\<\Phi^a_{p(n)}|\Phi^a_{q(n)}\>=\d_{pq}$) where $(n)$ denotes that they are all $n$-particle states.
The identity matrix for $X_a$ is given by $\bI_{X_a}^{(n)} = \sum_q|\Phi_{q(n)}^a\>\<\Phi_{q(n)}^a|$. Then, for an arbitary $n$-particle $X_a$-local state $\sum_r \p^r |\Phi_{r(n)}^a\>$ ($\sum_{r}|\p^r|^2=1$), the trace is given by
\begin{align}
    \sum_{q}\<\Phi^a_{q(n)}|\Big(\sum_{r,s}\p^r\p^{s*}|\Phi^a_{r(n)}\>\< \Phi^a_{r(n)}\Big)|\Phi^a_{q(n)}\> =\sum_{q,r,s}\d_{qr}\d_{rq}\p^r\p^{s*} =1.  
\end{align} By convex roof extention, Eq.~\eqref{partialnsa} also satisfies C1 for mixed local states with NSSR. And, since it is direct to see that the reduced density matrix of an arbitary separable state Eq.~\eqref{bisep.} is pure, C2 is also satisfied. 

Second, we show that Eq.~\eqref{partialnsa} does not meet either C1 or C2 in general (without NSSR). 
Suppose that the bosons have two internal degrees of freedom, $\uparrow$ and $\downarrow$. We consider two simple boson states, $\ha^\dagger_{X_a,\uparrow}\ha^\dagger_{X_a,\downarrow}|vac\> \equiv |\uparrow,\downarrow\>_{X_a}$ and $\frac{1}{\sqrt{2}}(\ha^\dagger_{X_a,\uparrow})^2 \ha^\dagger_{\bar{X}_a,\downarrow}|vac\> \equiv  |\uparrow,\uparrow\>_{X_a}\vee|\downarrow\>_{\bar{X_a}}$ (here $\bar{X}_a$ is the complement system of $X_a$). 
%For this case, the identity matrix is given by $\bI_{X_a} =\sum_{m=0}^\infty\sum_{q}|\Phi_{q(m)}^a\>\<\Phi_{q(m)}^a|$. Then the trace of $\sum_r \p^r |\Phi_{r(n)}^a\>$ usually does not even become a number. As a simple example, suppose the bosons have two internal degrees of freedom ($\uparrow,\downarrow$).
%And the identity matrix $\bI_{X_a}$ is 
%    \begin{align}
%       \bI_{X_a}= \sum_{n=0}^\infty\sum_{k=0}^n &|\underbrace{\uparrow,\cdots,\uparrow}_{k},\underbrace{\downarrow,\cdots,\downarrow}_{n-k} \> \<\underbrace{\uparrow,\cdots,\uparrow}_{k},\underbrace{\downarrow,\cdots,\downarrow}_{n-k}|^{X_a}
%    \end{align}
%(here we set $|s_1,\cdots,s_n\>^{X_a} \equiv |X_a,s_1,\cdots\>\vee\cdots|X_a,s_n\>$).
Since NSSR is not preserved now, the trace of a local state $|\uparrow,\downarrow\>_{X_a}$ is given from Eq.~\eqref{partialnsa} by
\begin{align}
    \Tr_{X_a} |\uparrow,\downarrow\>\<\uparrow,\downarrow|_{X_a} =& \<\uparrow|\cdot |\uparrow,\downarrow\>\<\uparrow,\downarrow|\cdot|\uparrow\>_{X_a} + \<\downarrow|\cdot |\uparrow,\downarrow\>\<\uparrow,\downarrow|\cdot|\downarrow\>_{X_a}  + \<\uparrow,\downarrow|\cdot |\uparrow,\downarrow\>\<\uparrow,\downarrow|\cdot |\uparrow,\downarrow\>_{X_a}  \nn \\
    =& |\uparrow\>\<\uparrow|_{X_a} + |\downarrow\>\<\downarrow|_{X_a} + 1,
\end{align} which is a nonsensical result, and does not satisfy C1.
And a similar computation show that the partial trace of a separable state $|\uparrow,\uparrow\>_{X_A}\vee|\downarrow\>_{\bar{X_a}}$ is not pure, i.e., C2 is not satisfied, either.  

%Finally, with the same setup, it is basically the same computation to see that the partial trace with Definition~\ref{partialsea} satisfies C1 and C2 for all cases. 

%The application of this discussion to fermions also shows that Definition~\ref{partialsea} is a more generalized definitions for the partial trace that is valid for any types of identical particles. 
    
%%%%%%%%%%%%%%%%%%%%%%%%%%%%%%%%%%%%%%%%%%%%%%%%%%%%%%%%%%%%%%%%%%%%%%%%%%%%%%%%%%%%%%%%%%%%%%%%%%%%%%%%%%%%%%%%%%%%%%%%%%%%%%%%%%%%%%%%%%%%%%%%%%%%%%%%%%%%

Therefore, we need to remedy the partial trace definition of identical particles Eq.~\eqref{partialnsa} to obtain the partial trace that satisfies C1 and C2 for all the possible situations. To achieve our goal, we introduce a new type of operation, which we name the \emph{local inner product}:
\begin{definition} The local inner product (denoted as $\circ$) is 
 a linear operation defined as a projection between a local state $|\Phi\>$ on a subsystem $X_a$ and a possibly nonlocal state $|\P\>$ on $\mH$. If $|\P\>$ is written as $|\P\> =\sum_{q}\p_q|\P'^q \>_{X_a}\otimes_{\pm} |\P''^q\>_{\bar{X}_a}$ ($|\P''^a\>_{\bar{X}_a}$ is an arbitrary state on the complementary system $\bar{X}_a$ and $\sum_q|\p_q|^2=1$),  $\<\Phi|\circ|\P\>$ is defined as 
 \begin{align}\label{circ}
   \<\Phi|\circ|\P\> =  \sum_q \<\Phi|\P'^q\>_{X_a} |\P''\>_{\bar{X}_a}. 
 \end{align}
 \label{localinner}
\end{definition}
Since the local inner product is linear, Eq.~\eqref{circ} can be directly extended to the projection of arbitrary nonlocal states. Note that this definition is possible only after the specification of subsystems $\{X_a,\bar{X}_a\}$ for the nonlocality; hence, \emph{the microcausality is prerequisite for defining the local inner product.}

By employing Definition~\ref{localinner}, the partial trace that satisfies C1 and C2 is defined as follows:
\begin{definition}\label{partialsea}
    For the identity matrix $\bI_{X_a} = \sum_{q}|\Phi^a_q\>\<\Phi^a_q|$ of a subsystem $X_a$, the partial trace over $X_a$ for a state $\r = \sum_{k}p_k|\P_k\>\<\P_k|$ ($\sum_k p_k=1$) in $\mH$ is defined as
    \begin{align}\label{generalpartial}
        \Tr_{X_a}(\r) = \sum_{q,k}p_k\<\Phi_q^a|\circ|\P_k\>\<\P_k|\circ|\Phi_q^a\>.
    \end{align} 
\end{definition}
In Eq.~\eqref{circ}, $\<\Phi|\P'\> $ can be $\<\Phi|\cdot|\P'\>$ when the particle numbers of $|\Phi\>$ and $|\P'\>$ are equal. Hence, for bosons with NSSR, the above definition becomes equivalent to   Eq.~\eqref{partialnsa} of NLA.

By Definition~\ref{partialsea}, the trace of an arbitrary local state $|\P\>_{X_a} = \sum_{q}\p_q|\Phi^a_q\>$ ($\sum_q|\p_q|^2=1$) in $X_a$ is given by 
\begin{align}
    \Tr_{X_a}(|\P\>\<\P|) &= \sum_{p,q,r}\p_q\p_r^*\<\Phi_p^a|\circ |\Phi^a_q\>\<\Phi^a_r|\circ|\Phi_p^a\>_{X_a} \nn \\ 
    &= \sum_{p,q,r}\p_q\p_r^*\<\Phi_p^a|\Phi^a_q\>\<\Phi^a_r|\Phi_p^a\>_{X_a} =1,
\end{align} which shows that Definition~\ref{partialsea} satisfies C1.  We can see that C2 is satisfied for Definition~\ref{partialsea} by inserting $|\P'^q \>_{X_a}\otimes_{\pm} |\P''^q\>_{\bar{X}_a}$ into Eq.~\eqref{generalpartial}. Therefore, we see that the partial trace of identical particles as Definition~\ref{partialsea} is suitable for deriving a reduced density matrix of identical particles for the most general case.

% Appendix~\ref{partialtrace} shows that Definition~\ref{generalpartial} satisfies the conditions 1 and 2 for all types of identical particles.

%%%%%%%%%%%%%%%%%%%%%%%%%%%%%%%%%%%%%%%%%%%%%%%%%%%%%%%%%%%%%%%%%%%%%%%%%%%%%%%%%%%%%%%%%%%%%%%%%%%%%%%%%%%%%%%%%%%%%%%%%%%%%%%%%%%%%%%%%%%%%%%

\section{Entanglement of fermions}\label{ferm}

As we have briefly explained at the end of Sec. \ref{microc}, the microcausality renders one to conceive the PSSR-preserving entanglement among local regions.
In this section, we investigate the entanglement of fermions with PSSR in bipartite systems. The partial trace technic by Definition~\ref{partialsea} is employed to quantify the entanglement. We show that PSSR permits the fermions to have more entanglement than bosons with NSSR. 
%This relation means that more states than bosons with NSSR can have coherence, which makes the separability condition of fermions funct 

\subsection{The separability conditions of fermions with PSSR}

%Unlike bosons that can infinitely condensate in the same local subsystem, the possible maximal number of fermions in the same mode is limited by the Pauli exclusion principle. 
%We first consider the separability conditions of fermionic states $without$ particle number restriction. And then the conditions are applied to our case of interest, i.e., when $N$ fermions spread over space including two distinguishable detectors $X$ and $Y$ as in Fig.~\ref{nparticles}.

According to the exclusion principle of fermions, the maximal total fermion number $\max(N)$ is determined by the spatial subsystem number $P$ and the internal degrees of freedom $S$, i.e., $\max(N)=PS$.

Before presenting the separablity condition of fermions for the general case, we first consider the simplest case, i.e., a bipartite spin half fermion system ($\max (N) =4$) with the two spatial subsystems ($X,Y$) and internal spin states ($\uparrow,\downarrow$). By PSSR, we treat even and odd parity states distinctively.
The most general form for a separable total state of even parity $|\P^{even}\>$ is given by
\begin{align}\label{separable_feven}
|\P_{even}^{sep}\> =\a\Big((\sum_{s_1=\uparrow,\downarrow}\a_{s_1}|s_1\>)_X\wedge(\sum_{s_2=\uparrow,\downarrow}\a_{s_2}|s_2\>_Y) \Big) +
    \b \Big( ( p|vac\>_X + q|\uparrow,\downarrow\>_X) \wedge (r|vac\>_Y +s|\uparrow,\downarrow\>_Y) \Big)
\end{align} ($|\a|^2+|\b|^2 = |p|^2+|q|^2=|r|^2+|s|^2 = \sum_{s_1}|\a_{s_1}|^2 = \sum_{s_2}|\a_{s_2}|^2=1 $).
Here the total vacuum state $|vac\>$ is expressed in the local form as $|vac\>_X\wedge|vac\>_Y$ and $|X,\uparrow\>\wedge|X,\downarrow\>\equiv |\uparrow,\downarrow\>_X$, etc. Note that, since PSSR is conserved not only in the total system but also in each subsystem, the two terms in the RHS of Eq.~\eqref{separable_feven} can not superpose from the viewpoint of local observers in $X$ and $Y$.
Similarly, 
an odd fermion state $|\P^{odd}\>$ is separable when it has the form
\begin{align}\label{separable_fodd}
|\P_{odd}^{sep} \>
=& \a\Big[\sum_{s_1}\a_{s_1}|s_1\>_X\wedge\big( p|vac\>_Y + \sum_{s_2,s_3}\m_{s_2s_3}|s_2,s_3\>_Y\big)\Big] 
+ \b  \Big[\big(q|vac\>_X + \sum_{s_4,s_5}\n_{s_4s_5}|s_4,s_5\>_X\big)\wedge \sum_{s_6}\a_{s_6}|s_6\>_Y\Big]
\end{align} ($|\a|^2+|\b|^2 =  \sum_{s_1}|\a_{s_1}|^2 = |p|^2+ \sum_{s_2,s_3}|\m_{s_2s_3}|^2 = |q|^2+ \sum_{s_4,s_5}|\m_{s_4s_5}|^2 =\sum_{s_6}|\a_{s_6}|^2 =1 $).

The generalization to the bipartite system with an arbitrary internal $S$ states ($0,1,\cdots ,S-1$) is straightforward.
A set of fermions that spread over two subsystems $X$ and $Y$ with internal $S$ states are separable if and only if the total state $|\P^{sep}\>$ $(= |\P_{even}^{sep}\> + |\P_{odd}^{sep}\>)$ is given by
\begin{align}\label{evensep}
    |\P_{even}^{sep}\> 
    =& \a\Big( \sum_{k=0}^{[\frac{S-1}{2}]}\sum_{s_1,\cdots,s_{2k+1}}a_{s_1\cdots s_{2k+1}}|s_1,\cdots ,s_{2k+1}\>^X\Big) \wedge \Big( \sum_{k=0}^{[\frac{S-1}{2}]}\sum_{s_1,\cdots,s_{2k}}b_{s_1\cdots s_{2k+1}}|s_1, \cdots ,s_{2k+1}\>^Y \Big) \nn \\
    &+ \b\Big(\sum_{k=0}^{[\frac{S}{2}]}\sum_{s_1,\cdots,s_{2k}}c_{s_1\cdots s_{2k}}|s_1, \cdots,s_{2k}\>^X\Big)\Big] \wedge\Big(\sum_{k=0}^{[\frac{S}{2}]}\sum_{s_1,\cdots,s_{2k}}d_{s_1\cdots s_{2k}}|s_1, \cdots,s_{2k}\>^Y\Big)
\end{align}
($[K]$ for a positive real number $K$ is the the biggest integer among smaller integers than $K$) and
\begin{align}\label{oddsep}
    |\P_{odd}^{sep}\>
    =& \a\Big( \sum_{k=0}^{[\frac{S-1}{2}]}\sum_{s_1,\cdots,s_{2k+1}}a_{s_1\cdots s_{2k+1}}|s_1,\cdots,s_{2k+1}\>_X\Big) \wedge \Big( \sum_{k=0}^{[\frac{S}{2}]}\sum_{s_1,\cdots,s_{2k}}b_{s_1\cdots s_{2k}}|s_1, \cdots, s_{2k}\>_Y \Big)\nn \\
    &+ \b\Big(\sum_{k=0}^{[\frac{S}{2}]}\sum_{s_1,\cdots,s_{2k}}c_{s_1\cdots s_{2k}}|s_1,\cdots,s_{2k}\>_X\Big) \wedge\Big(\sum_{k=0}^{[\frac{S-1}{2}]}\sum_{s_1,\cdots,s_{2k+1}}d_{s_1\cdots s_{2k+1}}|s_1, \cdots,s_{2k+1}\>_Y\Big)
\end{align}
 with the definition
\begin{align}\label{vac}
    \sum_{s_1,\cdots,s_{2k}}a_{s_1\cdots s_{2k}}|s_1,\cdots,s_{2k}\>\Big|_{k=0} =a_0|vac\>
\end{align} (each complex coefficients of wavefunctions in Eqs.~\eqref{evensep},~\eqref{oddsep}, and \eqref{vac} are set to satisfy the normalization conditions). 

In Eq.~\eqref{evensep}, the first line is the exterior product of two even local states, while the second is that of two even local states. In Eq.~\eqref{oddsep}, the first line is the exterior product of an odd local state in $X$ and an even local state in $Y$, while in the second line  the parities of $X$ and $Y$ are reversed. We can directly see that  Eqs.~\eqref{evensep} and \eqref{oddsep} correspond to Eqs.~\eqref{separable_feven} and \eqref{separable_fodd} respectively when $S=2$.

%It is not hard to write down separable states of the general $P$-partitite case, though the equation forms are much more complicated. \\

\subsection{Reduced density matrix and von Neumann entropy of fermions}

We can apply the above observation to the case when $N$ fermions spread over space are observed by two distinguishable detectors $X$ and $Y$ (Fig.~\ref{passiveop}),  which is discussed for bosons in Ref.~\cite{franco2018indistinguishability,chin2019entanglement}, i.e., 
\begin{align}\label{nfs}
    |\P\>
    =\wedge_{i=1}^N|\P_i\> = \wedge_{i=1}^N(r_i|X,s_i\>+l_i|Y,s_i\>)
\end{align} with  $|r_i|^2 +|l_i|^2 =1$ for all $i$.
This state is separable when $|\P\>$ is of the form~\eqref{evensep} for an even $N$ or \eqref{oddsep} for an odd $N$. Here we derive the reduced density matrix of the fermions, by which we can compute the \emph{entanglement entropy} of the system. 
\begin{figure}[t]
	\centering
	\includegraphics[width=8cm]{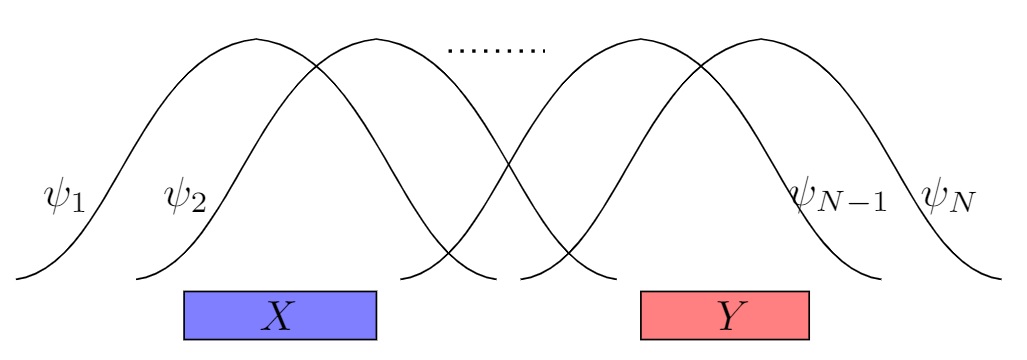}
\caption{$N$ dentical particles projected in a bipartite system. $N$ identical particles with spatial wave functions $(\p_1,\p_2,\cdots ,\p_N)$ are detected by two distinguishable subsystems (modes) $X$ and $Y$, which fix the computational basis as $\{|X\>,|Y\>\}$. }
\label{passiveop}
\end{figure}

We first consider the simplest example, i.e., $N=S=2$ ($s_i =\uparrow$ or $\downarrow$).
For this case, the even and odd identity matrices of the subsystem $X$ are given by
\begin{align}\label{fermionidentity}
    \bI^{even}_X = |vac\>\<vac|_X + |\uparrow,\downarrow\>\<\downarrow,\uparrow|_X, \qquad 
    \bI_X^{odd} = |\uparrow\>\<\uparrow|_X + |\downarrow\>\<\downarrow|_X.
\end{align}
The wave function is written from Eq.~\eqref{nfs} with $N=2$ as
\begin{align}
 |\P\> = &\big( r_1r_2|\uparrow,\downarrow\>_X\wedge|vac\>_Y  +l_1l_2|vac\>_X\wedge|\uparrow,\downarrow\>_Y \big)+ \big( r_1l_2|\uparrow\>_X\wedge|\downarrow\>_Y - l_1r_2|\downarrow\>_X\wedge|\uparrow\>_Y\big)
 %=& |\P_{\even}^{N=2}\> + |\P_{\odd}^{N=2}\>,
\end{align} 
By using Eq.~\eqref{fermionidentity} and Definition~\ref{partialsea}, we can obtain two reduced density matrices according to the parity of the subsystem $Y$. The measurable reduced density matrix $\r^{(m)}_Y$ at $Y$ is given by
\begin{align}
 \r^{(m)}_{Y} = p^{even}\r_{Y}^{even} + p^{odd}\r^{odd}_Y
\end{align} where
 \begin{align}
 &\r^{even}_Y 
 = \frac{|r_1r_2|^2 |vac\>\<vac|_Y + |l_1l_2|^2|\uparrow,\downarrow\>\<\downarrow,\uparrow|_Y}{|r_1r_2|^2  + |l_1l_2|^2},\quad 
 \r^{odd}_Y =  \frac{|r_1l_2|^2 |\downarrow\>\<\downarrow|^Y + |l_1r_2|^2 |\uparrow\>\<\uparrow|^Y}{|r_1l_2|^2 + |l_1r_2|^2}, \nn \\
  &p^{even}= |r_1r_2|^2  + |l_1l_2|^2 ,\quad p^{odd} = |r_1l_2|^2 + |l_1r_2|^2. \quad (p^{even} + p^{odd} =1)
  \end{align}

Then the total entanglement entropy $E(\r_Y^{(m)})$, which is defined as
\begin{align}
&E(\r_Y^{(m)})\equiv  p^{even}E(\r_Y^{even}) + p^{odd}E(\r_Y^{odd}), \nn\\ 
 &\qquad (E(\r_Y^{even}) = -\Tr_Y[\r_Y^{even}\log \r_Y^{even}],\quad   E(\r_Y^{odd}) = -\Tr_Y[\r_Y^{odd}\log \r_Y^{odd}])
\end{align} 
is given by
\begin{align}
E(\r^{(m)}_Y)
    %=& (|r_1r_2|^2  + |l_1l_2|^2)E(\r_Y^{even})+(|r_1l_2|^2 + |l_1r_2|^2) E(\r_Y^{odd}) \nn \\
    =&  -|r_1r_2|^2\log\Big[\frac{|r_1r_2|^2}{|r_1r_2|^2  + |l_1l_2|^2}\Big] -|l_1l_2|^2\log\Big[ \frac{|l_1l_2|^2}{|r_1r_2|^2  + |l_1l_2|^2}\Big] \nn \\
    &-|r_1l_2|^2\log\Big[\frac{|r_1l_2|^2}{|r_1l_2|^2 + |l_1r_2|^2}\Big] -|l_1r_2|^2\log\Big[\frac{|l_1r_2|^2}{|r_1l_2|^2 + |l_1r_2|^2}\Big].
\end{align}
The state is unentangled when one of $(r_1,r_2,l_1,l_2)$ is zero. A noteworthy difference from the bosonic case is that the maximal $E(\r_{ent})$ is given when $|r_1|=|r_2|=|l_1|=|l_2|=\frac{1}{\sqrt{2}}$ by 1, which is twice bigger than maximal $E(\r_{ent})$ for the bosonic case (Fig.~\ref{entropyfm}, compare  Eq.~\eqref{bsent} of Appendix~\ref{NSSR}). This is by the fact that the vacuum state composes the basis of even parity state as seen in Eq.~\eqref{fermionidentity}. We can generalize this quantitative feature to an arbitrary $N$-fermion case Eq.~\eqref{nfs} as the following theorem.
\begin{figure}[t]
    \centering
    \includegraphics[width=6cm]{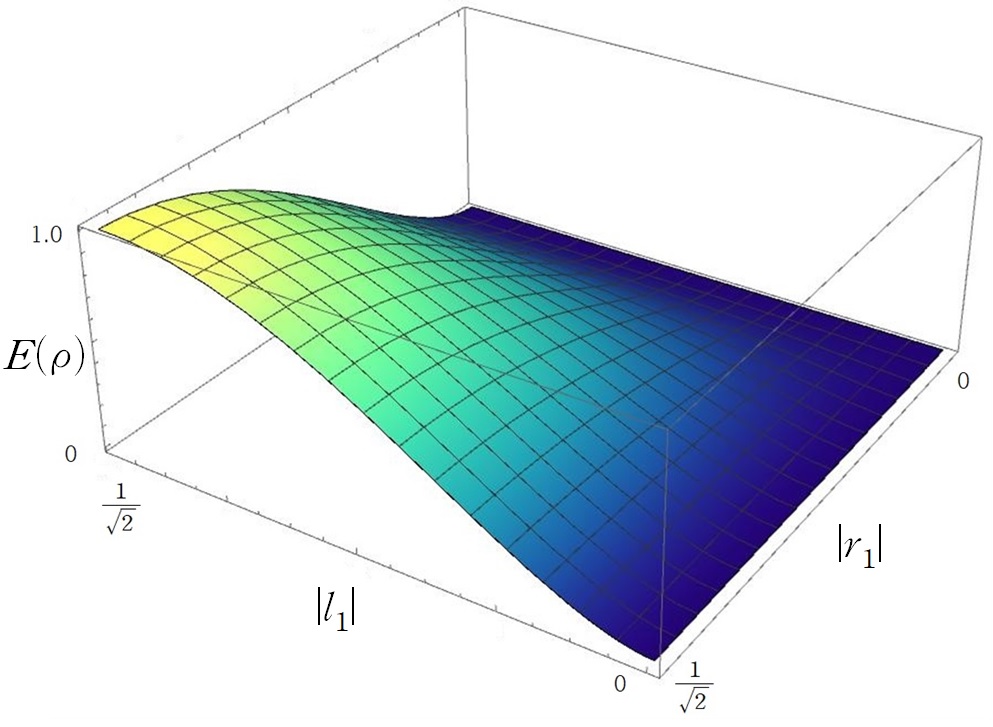}
    \caption{Entanglement entropy of 2 fermions according to the variation of spatial coherence.  The maximal value of the entropy is $1$, which is twice bigger than that of the 2 boson entropy. }
    \label{entropyfm}
\end{figure}

%For an entangled even state
%\begin{align}
% |\P^{even}_{ent}\>
% &= \a(p|vac\>^X\wedge|vac\>^Y +q|\uparrow,\downarrow\>^X \wedge |\uparrow,\downarrow\>^Y )  
% +\b (r|X,\uparrow\>\wedge|Y,\downarrow\> + s(|X,\downarrow\>\wedge|Y,\uparrow\>),
% \end{align}

\begin{theorem} When $N \le S$,
the maximal entropy of identical fermions in bipartite subsystems is given  by $N-1$.
\begin{proof}
An $N$ fermion state measured by two detectors at $X$ and $Y$ is expressed as
 \begin{align}\label{nf2d}
     &|\P_1\>\wedge|\P_2\>\wedge\cdots \wedge|\P_N\> \nn \\
     &= \big(l_1|X,s_1\> +r_1|Y,s_1\>\big)\wedge \dots \wedge  \big(l_N|X,s_N\> +r_N|Y,s_N
     \>\big). 
\end{align}
By expanding the above equation according to the particle number per mode, we have 
\begin{align}\label{nf2d1}
&|\P_1\>\wedge|\P_2\>\wedge\cdots \wedge|\P_N\> \nn \\
     &=
     \Big(\prod_{i=1}^N r_i\Big)|vac\>_X\wedge|s_1s_2\cdots s_N\>_Y  
     + \sum_{i}(-1)^{i-1}\frac{l_i}{r_i} \Big(\prod_{j=1}^N r_j\Big) |s_i\>_X\wedge|s_1s_2\cdots(s_i)\cdots s_N\>_Y \nn \\
     &\quad + \sum_{i< j}(-1)^{i+j-1}\frac{l_il_j}{r_ir_j}\Big(\prod_{j}^N r_j\Big)|s_is_j\>_X\wedge |s_1\cdots(s_i)\cdots(s_j)\cdots  s_N\>_Y +\cdots +  \Big(\prod_{j}^N l_j\Big) |s_1s_2\cdots s_N\>_X\wedge|vac\>_Y,
 \end{align}     
 where $(r_i)$ means that it is absent in the ket. By defining 
$(N-n,n)$ as the summation of states with $N-n$ fermions in $X$ and $n$ fermions in $Y$, Eq.~\eqref{nf2d1} is rewritten as
\begin{align}
|\P_1\>\wedge|\P_2\>\wedge\cdots \wedge|\P_N\> &= (N,0) + (N-1,1)+ (N-2,2) + \cdots +(0,N) \nn \\
 &= \Big[ (N,0) + (N-2,2) + (N-4,4) + \cdots \Big] + \Big[ (N-1,1) + (N-3,3)  + \cdots \Big]. 
\end{align}
 By PSSR, we can see that a term in the first bracket of the second line of Eq.~\eqref{nf2d} cannot superpose with a term in the second bracket according to their local parity.

%\begin{align}\label{nfermexpansion}
%    |\P_1\>\wedge|\P_2\>\wedge\cdots \wedge|\P_N\> 
%    =& \Big( l_1l_2\cdots l_N |s_1s_2\cdots s_N\>_X\wedge|vac\>_Y 
%     + l_{1}\cdots l_{N-2}r_{N-1}r_{N}|s_{1}\cdots s_{N-2}\>_X\wedge|s_{N-1}s_N\>_Y \nn \\
%     & -l_{1}\cdots l_{N-3}r_{N-2}l_{N-1}r_{N}|s_{1}\cdots s_{N-3}s_{N-1}\>_X\wedge|s_{N-2}s_N\>_Y +\cdots  \Big) \nn \\ 
%    =& \Big( (N,0) + (N-2,2) + (N-4,4) + \cdots \Big) + \Big( (N-1,1) + (N-3,3)  + \cdots \Big) 
%\end{align}
%where $(N-n,n)$ denotes states with $N-n$ fermions in $L$ and $n$ fermions in $R$. 
Each $(N-n,n)$ has $\binom{N}{n}$ terms, by which the numbers of the terms in each group are equal.  The state is maximally entangled when $|l_i|=|r_i| =\frac{1}{\sqrt{2}}$ for all $i$, which makes the absolute value of all the amplitudes $2^{-N/2}$. 
By combining all these facts, the PSSR-preserving entanglement entropy of bipartite $N$ fermions is given by
 \begin{align}
     -2\times \frac{1}{2}( \frac{1}{2^{N-1}}\log\Big[\frac{1}{2^{N-1} }\Big]\times 2^{N-1} ) = N-1.
 \end{align}

\end{proof}

\end{theorem}

%%%%%%%%%%%%%%%%%%%%%%%%%%%%%%%%%%%%%%%%%

This monotonic increase of maximal entanglement along the particle number is absent in the bosonic case (see Appendix~\ref{NSSR}), which is because
PSSR permits more fermionic terms to superpose.

The discussion so far has shown that the quantum non-locality of identical particles can be analyzed in a very similar manner to that of non-identical particles with the definition of partial trace (Definition~\ref{partialsea}). In the next section, we will show the factorizability condition of identical particle Hilbert space by microcausality, by which the optimization of identical particle states  is  possible for the quantification of entanglement.

%%%%%%%%%%%%%%%%%%%%%%%%%%%%%%%%%%%%%%%%%

\section{Factorization of Hilbert space}\label{factorization}

Once particles are grouped by their locations, quantifying the physically tangible entanglement of identical particles seems to follow the same process with the non-identical particle case. 
For example, observing the bosonic  state~\eqref{bisep.}, the symmetric tensor product $\vee$ between $|\P_n^X\>$ and $|\P_{N-n}^Y\>$ plays the role of  the direct tensor product $\otimes$ in non-identical particle systems.  Definition~\eqref{localinner} of the local inner product also shows that $\otimes_\pm$ works the same as $\otimes$ under the restriction of microcausality. 

Here we show that this correspondence is not a coincidence and the (anti-) symmetric products $\otimes_{\pm}$ can be replaced with $\otimes$.
In other words, \emph{the Hilbert space of identical particles are factorizable} as that of non-identical particles.
%Indeed, many works that deal with many particle systems in 2QL assume this factorizability without examining the fundamental symmetry of identical particles (see, e.g.,~\cite{schuch2004nonlocal,schuch2004quantum}). Our study  fill in this gap in a mathematically rigorous way. 
The factorized Hilbert spaces of identical particles are, however, not particle Hilbert spaces but \emph{local} Hilbert spaces, in which each local subsystem corresponds to a Hilbert subspace that constructs the total Hilbert space.
The following theorem clearly states the factorizability of the local Hilbert space.  \\

\begin{theorem}
 If identical particles spread over two subsystems $X$ and $Y$, and the subsystems are spatially distinguishable, then the total Hilbert space is fatorized as  $H_X\otimes H_Y$.   \label{factor} 
\end{theorem}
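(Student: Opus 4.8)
The plan is to exhibit an explicit unitary isomorphism between the total Fock space $F(\mH)$ of the identical particles and the tensor product $H_X\otimes H_Y$ of the two local Fock spaces, and to show that the one ingredient making this map inner-product preserving is exactly the microcausal orthogonality $\<X|Y\>=0$ of the two distinctive subsystems. First I would fix the splitting of the single-particle space: distinctiveness means the spatial basis obeys $\<X|Y\>=0$, so $\mH=\mH_X\oplus\mH_Y$ with $\mH_X=\text{span}\{|X,s\>\}_s$, $\mH_Y=\text{span}\{|Y,s\>\}_s$, and I set $H_X=F(\mH_X)$, $H_Y=F(\mH_Y)$. Every single-particle state decomposes as $|\P_i\>=|\P_i^X\>+|\P_i^Y\>$, so by multilinearity of the (anti)symmetric product any $N$-particle state $|\P_1\>\otimes_\pm\cdots\otimes_\pm|\P_N\>$ expands into a sum over sectors labelled by the number $n$ of factors drawn from $X$ (and $N-n$ from $Y$); this is precisely the grouping already displayed in Eq.~\eqref{nfermexpansion}. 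Using the commutation relation Eq.~\eqref{commutation}, i.e.\ microcausality, each sector can be rearranged into the normal-ordered form $|\P^X\>\otimes_\pm|\Phi^Y\>$ with $|\P^X\>\in H_X$ and $|\Phi^Y\>\in H_Y$.

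Next I would define $\mathcal{U}$ on these building blocks by $\mathcal{U}(|\P^X\>\otimes_\pm|\Phi^Y\>)=|\P^X\>\otimes|\Phi^Y\>$ and extend it linearly. The heart of the argument is to check that $\mathcal{U}$ is isometric, i.e.\ that the Fock inner product factorizes; by bilinearity it suffices to verify this on product states. For two such states $|\P^X\>\otimes_\pm|\Phi^Y\>$ and $|{\P'}^X\>\otimes_\pm|{\Phi'}^Y\>$ I would invoke the transition-amplitude formula Eq.~\eqref{transition}: their overlap is a permanent (bosons) or determinant (fermions) of the Gram matrix $[\<\Phi_i|\P_j\>]$ of the constituent single-particle states. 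Because $\<X|Y\>=0$, every entry linking an $X$-localized vector to a $Y$-localized one vanishes, so the Gram matrix is block diagonal; since both the permanent and the determinant of a block-diagonal matrix factor as the product over the blocks, the overlap equals $\<\P^X|{\P'}^X\>\,\<\Phi^Y|{\Phi'}^Y\>$, which is exactly the inner product on $H_X\otimes H_Y$.

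The same block structure also shows that two states with different occupation numbers in $X$ have a Gram matrix in which $|n-n'|$ of the $X$-rows can only be matched to $Y$-columns, forcing a zero factor in every term of the permanent/determinant; hence distinct sectors are mutually orthogonal and the expansion of Eq.~\eqref{nfermexpansion} is an orthogonal decomposition. Together these establish that $\mathcal{U}$ preserves the inner product. Since $\mathcal{U}$ carries a spanning set of $F(\mH)$ onto a spanning set of $H_X\otimes H_Y$ and is isometric, it extends to a unitary isomorphism, giving $F(\mH)\cong H_X\otimes H_Y$ as claimed.

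The step I expect to be the main obstacle is the fermionic bookkeeping: reordering the antisymmetric factors into the $X$-then-$Y$ normal form introduces signs $(-1)^\sigma$, and one must verify these are absorbed consistently so that $\mathcal{U}$ is genuinely well defined (independent of the chosen ordering) rather than merely a unitary up to phase on each sector. This is the graded content of $\Lambda(\mH_X\oplus\mH_Y)\cong\Lambda(\mH_X)\otimes\Lambda(\mH_Y)$; for the Hilbert-space statement the signs are immaterial, but they must be tracked carefully. An alternative, more abstract route would bypass the combinatorics entirely: microcausality supplies mutually commuting local observable algebras $\mathcal{A}_X,\mathcal{A}_Y$, and the quantum-causality principle of Ref.~\cite{navascues2012physical} then forces the ambient Hilbert space to factor as $H_X\otimes H_Y$ compatibly with those algebras.
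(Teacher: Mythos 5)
Your proof is correct, but it takes a genuinely different route from the paper. You give a constructive argument: you build the explicit map $\mathcal{U}:|\P^X\>\otimes_\pm|\Phi^Y\>\mapsto|\P^X\>\otimes|\Phi^Y\>$ and verify isometry via the transition-amplitude formula \eq{transition}, observing that $\<X|Y\>=0$ makes the Gram matrix block diagonal, so the permanent/determinant factorizes and distinct occupation sectors are orthogonal; this is the standard ``exponential property'' $F(\mH_X\oplus\mH_Y)\cong F(\mH_X)\otimes F(\mH_Y)$ of Fock spaces. The paper instead argues operationally: it invokes Lemma 4 of Ref.~\cite{navascues2012physical} (a bipartite correlation set is quantum causal if and only if the ambient Hilbert space factorizes as $\mH_X\otimes\mH_Y$), and then shows identical particles are quantum causal because the partial trace of Definition~\ref{partialsea} yields a reduced state for Bob that is independent of Alice's measurement choice --- precisely the abstract route you sketch in your closing sentence. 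Your approach buys explicitness and self-containedness: the unitary is concrete, the role of the orthogonality $\<X|Y\>=0$ (microcausality) is pinpointed, and no external operational axiom is needed; the cost is the combinatorial bookkeeping you flag yourself --- the fermionic reordering signs (the graded structure of $\Lambda(\mH_X\oplus\mH_Y)\cong\Lambda(\mH_X)\otimes\Lambda(\mH_Y)$) and, one item you pass over silently, the normalization factors $\mN$, since the prefactor for an $N$-particle product differs from the product of the $n$- and $(N-n)$-particle prefactors (it does work out, because creation operators in disjoint orthogonal modes generate norms that multiply, but it should be checked rather than assumed). The paper's approach buys brevity and a direct link between factorizability and a physical principle (no-signaling between the subsystems), at the cost of resting on the cited lemma and on the well-definedness of the reduced density matrix, which is established elsewhere in the paper; it also yields no explicit isomorphism, which your construction provides and which is what one actually uses when rewriting states as in Fig.~\ref{equiv}.
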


\begin{proof}
%A brief sketch of the whole proof is given in the form of a syllogism:  % 1. A Hilbert space $\mH$ is factorizable if and only if the system is quantum causal~\cite{navascues2012physical}. 2. An identical particles' system is quantum causal. 3. Combining 1 and 2, the total Hilbert space is factorizable ($\mH= \mH_X\otimes \mH_Y$).

To proof the factorizability of the bipartite state with identical particles, we employ the concept of \emph{quantum causality} introduced in Ref.~\cite{navascues2012physical}. It was shown in the work that a Hilbert space $\mH$ is factorizable into two Hilbert space, i.e., $\mH = \mH_X\otimes \mH_Y$ if and only if the system has the quantum causality (Lemma 4 of Ref.~\cite{navascues2012physical}. Therefore, if the system of identical particles is quantum  causal, then the Hilbert space of the identical particles is factorizable as $\mH = \mH_X\otimes \mH_Y$. 

First, we briefly explain what the quantum causality is. Assume that Xabier is in $X$ and Yoko is in $Y$. Xabier can choose a measurement operation $x$ and produce a datum $q$, and Yoko can choose  $y$ and produce $r$. If they can compare their results after obtaining sufficiently many data, they can estimate the set of probability distributions $\{P(q,r,|x,y)\}$ for all possible $(q,r,x,y)$. Then, the notion of quauntum causality is defined as follows~\cite{navascues2012physical}:\\

\begin{definition}

  $P(q,r|x,y)$ is quantum causal if there exist a Hilbert space $\mH_Y$, projector operators $\{F_r^y:\sum_rF_r^y=\bI_Y\}$, and a set of subnormalized quantum states $\{\s_q^x\}$ (a possible state of Yoko when Xabier activates $x$ and produces $q$) such that 
  \begin{align}\label{quansal}
    & P(q,r|x,y) = \Tr(F_r^y\s_q^x),\quad 
     \sum_q\s_q^x=\s.
  \end{align}
Here $\s$ is independent of $x$.
\end{definition} 
Hence, the statement that Yoko's system is quantum causal means that it is independent of Xabier's system and also compatible with quantum mechanics. 
%And Lemma 4 of Ref.~\cite{navascues2012physical} shows that $P(a,b|x,y)$ is quantum causal if and only if $\mH = \mH_X\otimes\mH_Y.$ 

And, it is not hard to see that the quantum system of identical particle is quantum causal. 
For a given state $|\P\>$ ($\in \mH$),
a subnormalized state of Yoko corresponding to the data $q$ of Xabier is obtained from Definition~\ref{localinner} and \ref{partialsea}, i.e,
\begin{align}\label{subnorm}
 \s_q^x = \<\Phi_q^X|\circ |\P\>\<\P|\circ |\Phi_q^X\>.
\end{align} We know that $\sum_q\s_q^x$ is the partial trace of $|\P\>$ that is independent of the basis choice in $X$. And $P(q,r|x,y)$ is computed from Eq.~\eqref{subnorm} as Eq.~\eqref{quansal}. Thus, the system of identical particles is quantum causal.

In conclusion, as we have mentioned at the beginning of the proof, the Hilbert space of identical particles is factorizable by the quantum causality. 
%for a subnormalized state of Bob corresponding to the data of Alice can be obtained using the partial trace technique on the total state as we have seen in the former section. Since the reduced density matrix of $Y$ is independent of Alice's basis choice in $X$, we can see that Bob's system is independent of Alice's. 
\end{proof}

With Theorem~\ref{factor}, we can write a $N$-particle state, e.g.,
\begin{align}
|X,s_1\>\otimes_{\pm} \cdots \otimes_{\pm}|X_,s_n\>\otimes_{\pm}|Y,s_{n+1}\>\otimes_{\pm} \cdots \otimes_{\pm}|Y_,s_{N}\> 
\end{align}
in a factorized form
\begin{align}
|s_1,\cdots,s_n\>^X\otimes|s_{n+1}, \cdots,s_{N}\>^Y
\end{align}
(see Fig.~\ref{equiv}).

\begin{figure}[t]
	\centering
	\includegraphics[width=8.5cm]{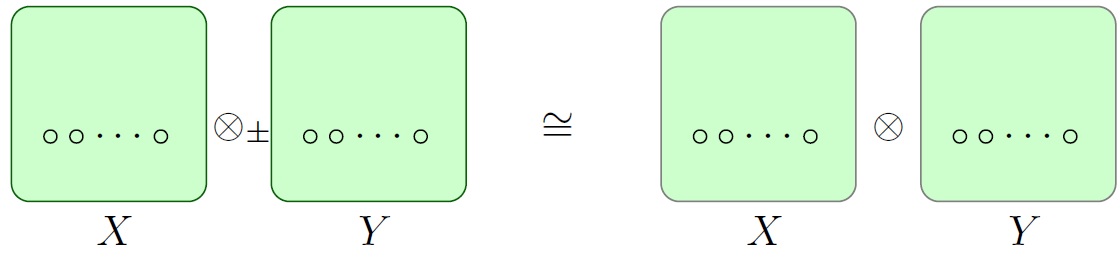}
	\caption{When the subsystems $X$ and $Y$ are distinctive, or more rigorously speaking, spacelike separated, the total Hilbert space $\mH_X\otimes_\pm\mH_Y$ is equivalent to $\mH_X\otimes\mH_Y$. In other words, the total Hilbert space of identical particles is factorizable according to the local distribution of subsystems.} 
\label{equiv}
\end{figure}

Theorem~\ref{factor} is closely related to Tsirelson's theorem~\cite{tsirelson_bellinequalities}, which shows that a quantum system with a factorized Hilbert space is equivalent to a system with two sets of commuting projection operators in a finite-dimensional Hilbert space.
Theorem~\ref{factor} can be applied to the entanglement problems of quantum fields with identical particles.  If each region is supposed to separate far enough from each other, the factorization property of Hilbert space is still valid in quantum fields. On the other hand, if the sub-regions are adjacent to each other, one should cautiously consider the boundary effect. 

The practical advantage of factorizing identical particles' Hilbert spaces is that it makes simpler the derivation of several non-local properties in identical particles' systems. It will become clear by seeing the identical particle version of the GHJW theorem and the CHSH inequality violation in the following discussion.

%%%%%%%%%%%%%%%%%%%%%%%%%%%%%%%%%%%%%%%%%%%%%%%%%%%%%%%%%%%%%%%%%%%%%%%%%%%%%%

\subsection{GHJW theorem of identical particles}\label{ghjw}

Here we see how an entangled state of identical particles can raise a nonlocal phenomenon by delving into the GHJW  theorem~\cite{gisin1984,hughston1993complete}, by which any purifications of mixed states on the extended system should have a specific local unitary relation. We show that the theorem is still valid with states of identical particles.
Even if we focus on the bosonic case here, its extension to the fermionic case is straightforward.\\

\begin{lemma}
		Suppose that $|\P\>$ and $|\P'\>$ are $N$-boson vectors in $\mH^{\vee N}$ so that $n$ particles  locate in $X$. If $\Tr_Y|\P\>\<\P| =\Tr_Y|\P'\>\<\P'| $, then there exists a unitary operation in the system $U =\mathbb{I}_X\otimes U_Y$ that satisfies $|\P\> = U|\P'\>$.

\end{lemma}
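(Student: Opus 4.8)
The plan is to reduce the statement to the standard uniqueness-of-purification argument for ordinary bipartite systems, which becomes available once we invoke the factorization Theorem~\ref{factor}. Since the $N$ bosons are distributed with $n$ in $X$ and $N-n$ in $Y$, and the two subsystems are distinctive, Theorem~\ref{factor} lets me write the total Hilbert space as $\mH_X\otimes\mH_Y$, where $\mH_X=\mH^{\vee n}$ and $\mH_Y=\mH^{\vee(N-n)}$ are the local bosonic spaces. Under this factorization the symmetric product $\vee$ linking the $X$-block and the $Y$-block behaves exactly as the ordinary tensor product $\otimes$, so $|\P\>$ and $|\P'\>$ may be treated as genuine bipartite vectors in $\mH_X\otimes\mH_Y$, and the partial trace of Definition~\ref{partialsea} coincides with the usual $\Tr_Y$ over $\mH_Y$. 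The problem therefore collapses to the claim that two bipartite pure states sharing a common reduced state on $X$ are related by a unitary acting on $Y$ alone.

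Next I would carry out the Hughston--Jozsa--Wootters construction in the factorized picture. Let $\r_X=\Tr_Y|\P\>\<\P|=\Tr_Y|\P'\>\<\P'|$ be the common reduced density matrix, and fix once and for all an orthonormal eigenbasis $\{|i\>_X\}$ with $\r_X=\sum_i p_i|i\>\<i|_X$. Expanding $|\P\>=\sum_i |i\>_X\otimes|f_i\>_Y$ with $|f_i\>_Y=(\<i|_X\otimes\mathbb{I}_Y)|\P\>$, the condition $\Tr_Y|\P\>\<\P|=\r_X$ forces $\<f_j|f_i\>_Y=p_i\d_{ij}$, so the nonzero $|f_i\>_Y$ are mutually orthogonal with norm $\sqrt{p_i}$; normalizing gives the Schmidt form $|\P\>=\sum_{i:p_i>0}\sqrt{p_i}\,|i\>_X\otimes|e_i\>_Y$ with $\{|e_i\>_Y\}$ orthonormal. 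Repeating verbatim for $|\P'\>$ yields $|\P'\>=\sum_{i:p_i>0}\sqrt{p_i}\,|i\>_X\otimes|e_i'\>_Y$ with $\{|e_i'\>_Y\}$ orthonormal and the \emph{same} coefficients $\sqrt{p_i}$ and \emph{same} $X$-basis. I would then define $U_Y$ on $\mathrm{span}\{|e_i'\>_Y\}$ by $U_Y|e_i'\>_Y=|e_i\>_Y$ and extend it to a full unitary on $\mH_Y$ by mapping an orthonormal completion of $\{|e_i'\>_Y\}$ to one of $\{|e_i\>_Y\}$. Acting with $\mathbb{I}_X\otimes U_Y$ then sends $|\P'\>$ term by term to $|\P\>$, giving $|\P\>=(\mathbb{I}_X\otimes U_Y)|\P'\>$.

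The main obstacle is not the linear algebra, which is standard, but justifying that the identical-particle structure does not spoil the reduction. Concretely I must confirm that fixing the sector ``$n$ particles in $X$'' is legitimate: for bosons obeying the particle-number superselection rule the occupation of each distinctive region is well defined, so $|\P\>$ and $|\P'\>$ live in the single block $\mH^{\vee n}\otimes\mH^{\vee(N-n)}$ and no cross-sector terms arise, and the reduced state computed via Definition~\ref{partialsea} is exactly the $\Tr_Y$ of the factorized bipartite space. A second, milder point is the promotion of the partial isometry to a genuine unitary: this needs the orthogonal complements of $\mathrm{span}\{|e_i'\>_Y\}$ and $\mathrm{span}\{|e_i\>_Y\}$ in $\mH_Y$ to have equal dimension, which holds automatically since both Schmidt systems are indexed by the same set $\{i:p_i>0\}$ inside the same $\mH_Y$. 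Degeneracies of $\r_X$ cause no trouble, because the eigenbasis $\{|i\>_X\}$ is chosen first and held fixed for both states, so the matching of coefficients across the two decompositions is forced rather than assumed.
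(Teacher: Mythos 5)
Your proof is correct and follows essentially the same route as the paper's: invoke the factorization theorem to treat the state as an ordinary bipartite vector, expand both states in the eigenbasis of the common reduced density matrix to obtain Schmidt forms with identical coefficients and the same $X$-basis, and connect the two $Y$-side orthonormal systems by a unitary $\mathbb{I}_X\otimes U_Y$. Your extra care about restricting to nonzero eigenvalues and extending the partial isometry to a genuine unitary on all of $\mH_Y$ merely tightens steps the paper leaves implicit.
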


	\begin{proof}
		The reduced density matrix can be written as
		\begin{align}\label{rdm}
		    \Tr_Y|\P\>\<\P| = \Tr_Y|\P'\>\<\P'|=\sum_{\vec{s}} w_{\vec{s}}|\vec{s}\>\<\vec{s}|_X
		\end{align} where $\vec{s} = (s_1,\cdots,s_{n})$. For any complete orthonormal basis set $\{\vec{r} = (r_1,\cdots ,r_{N-n})\}$ of $Y$, we can write $|\P\>$ as
		\begin{align}\label{psi}
		    |\P\> = \sum_{\vec{s},\vec{r}}\p_{\vec{s},\vec{r}} |\vec{s}\>_X\otimes |\vec{r}\>_Y
		\end{align} using Theorem 1.
		By defining $|\vec{s}\>_Y =\sum_{\vec{r}}\p_{\vec{s},\vec{r}}|\vec{r}\>_Y$, Eq.~\eqref{psi}  is given by
		\begin{align}
		    |\P\> =\sum_{\vec{s}}|\vec{s}\>_X\otimes |\vec{s}\>_Y. 
		\end{align}
		Combining Eqs.~\eqref{rdm} and \eqref{psi}, we obtain
	  $\<\vec{s}|\vec{t}\>_Y=\d_{\vec{s},\vec{t}}w_{\vec{s}}$.
	  Hence, by defining an orthonormal set  
	 $\{|\hat{s}\>_Y\equiv  |\vec{s}\>_Y/\sqrt{w_{\vec{s}} }\}_{\vec{s}}$, $|\P\>$ is finally written as 
	 \begin{align}
	     |\P\> = \sum_{\vec{s}} \sqrt{w_{\vec{s}}}|\vec{s}\>_X\otimes |\hat{s}\>_Y
	 \end{align} The dimension difference of $\mathcal{H}_X$ and $\mathcal{H}_Y$ is not a problem here. When $\textrm{dim}\mathcal{H}_X \neq \textrm{dim}\mathcal{H}_Y$, the number of zero eigenvalues of $\Tr_X|\P\>\<\P|$ and $\Tr_Y|\P\>\<\P|$ differ so that the nonzero eigenvalue numbers become equal.      
     Applying the same process, $|\P'\>$ can be expressed with another orthonormal set $\{\hat{s}'\}_{\vec{s}}$ as
     \begin{align}
      |\P' \> = \sum_{\vec{s}}\sqrt{w_{\vec{s}}} |\vec{s}\>_X\otimes |\hat{s}'\>_Y. 
     \end{align}
 Then the two orthonormal bases  $\{ |\hat{s}\>_Y\}$ and $\{|\hat{s}'\>_Y\}$ are connected by a unitary tranformation $U_Y \equiv \sum_{\vec{s}}|\hat{s}\>\<\hat{s}'|_X$, by which $|\P\>$ and $|\P'\>$ are connected by
 \begin{align}
    |\P\> = (\mathbb{I}_X\otimes U_{Y})|\P'\> \equiv U|\P'\>.
 \end{align}
 
	\end{proof}
	
	Using the above lemma, any $|\P'\>$ that satisfies $\r_X = \Tr_Y|\P\>\<\P|$ can be transformed to $|\P\> = \sum_{\vec{s}}\sqrt{w_a}|\vec{s}\>_X\otimes|\vec{s}\>_Y$, which results in the GHJW theorem of bosons: 

\begin{theorem}
    	 (GHJW theorem for identical particles) Suppose $N$ bosons locate in two orthogonal subsystems $X$ and $Y$ with internal states $s_i$ ($i=1,\cdots, N$). 
		The total state of the bosons $|\P\>$ is a vector in $\mH^{\vee N}\equiv \mH^{\vee n}\otimes \mH^{\vee (N-n)}$ with $\r_{X}=\Tr_Y|\P\>\<\P|$. For any convex summation form of $\r_{X}= \sum_{a}w_{a}|\P^a_{(n)}\>\<\P^a_{(n)}|_X$ ($w_a\ge 0$, $^\forall a$), there exists an orthonormal set $\{|\P^a_{n}\>\}_{a}$  of the subsystem $X$ such that \begin{align}
		|\P\> = \sum_{a}\sqrt{w_a}|\P_{n}^a\>_X\otimes|\P^a_{(N-n)}\>_Y.
		\end{align}
\end{theorem}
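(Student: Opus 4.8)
The plan is to reduce the identical-particle statement to the ordinary GHJW/steering argument by leaning on the two results already established: Theorem~\ref{factor}, which identifies $\mH^{\vee N}$ with the genuine tensor product $\mH^{\vee n}\otimes\mH^{\vee(N-n)}$, and the preceding lemma, which pins down the unitary freedom of purifications. Once the factorization is invoked, $|\P\>$ is literally a bipartite vector on $\mH_X\otimes\mH_Y$ with $\mH_X=\mH^{\vee n}$ and $\mH_Y=\mH^{\vee(N-n)}$, so every step of the non-identical GHJW argument transcribes verbatim; the only labor is to build the right auxiliary purification and then push the lemma through.

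First I would manufacture an explicit purification of $\r_X$ out of the given ensemble. Choosing any orthonormal family $\{|a\>_Y\}$ of $\mH_Y$ (which requires $\dim\mH_Y$ to be at least the number of ensemble terms, arranged by padding with zero-weight entries and by the zero-eigenvalue slack already noted in the proof of the lemma), set
\begin{align}
  |\P'\> = \sum_a \sqrt{w_a}\,|\P^a_{(n)}\>_X\otimes |a\>_Y .
\end{align}
A direct computation with Definition~\ref{partialsea}, using $\Tr_Y(|a\>\<b|_Y)=\d_{ab}$, gives $\Tr_Y|\P'\>\<\P'| = \sum_a w_a |\P^a_{(n)}\>\<\P^a_{(n)}|_X = \r_X$, so $|\P'\>$ purifies the same reduced state as the given $|\P\>$, and $\<\P'|\P'\>=\sum_a w_a=1$.

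Next I would apply the preceding lemma to the pair $|\P\>,|\P'\>$. Both are $N$-boson vectors in $\mH^{\vee N}$ with $n$ particles in $X$ and with the common reduced density matrix $\Tr_Y|\P\>\<\P|=\Tr_Y|\P'\>\<\P'|=\r_X$, so the lemma supplies a local unitary $U=\bI_X\otimes U_Y$ with $|\P\>=U|\P'\>$. Pushing $U_Y$ through the sum and defining the steered states $|\P^a_{(N-n)}\>\equiv U_Y|a\>_Y$, which stay orthonormal because $U_Y$ is unitary, yields
\begin{align}
  |\P\> = \sum_a \sqrt{w_a}\,|\P^a_{(n)}\>\otimes |\P^a_{(N-n)}\> ,
\end{align}
which is the claimed decomposition; the orthonormal family carrying the steering content is the $\{|\P^a_{(N-n)}\>\}$ sitting on $Y$.

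The step I expect to be the main obstacle is not the algebra but the justification that the factorization legitimately converts $\vee$ into $\otimes$ at the level of \emph{states and their partial traces}: one must confirm that the auxiliary $|\P'\>$ genuinely lives in $\mH^{\vee N}$ and that its $\circ$-partial trace reproduces $\r_X$ exactly, with no spurious normalization introduced by the symmetric product. This is precisely what Theorem~\ref{factor} together with Definition~\ref{partialsea} guarantee, so the residual care is bookkeeping: matching the cardinality of the ensemble to $\dim\mH_Y$ (handled by zero padding and the dimension-mismatch remark of the lemma), and observing the labeling subtlety that the orthonormal set produced resides on $Y$ rather than $X$.
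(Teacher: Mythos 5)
Your proposal is correct and takes essentially the same route as the paper: the paper likewise obtains the theorem by constructing from the given ensemble an auxiliary purification of $\r_X$ and invoking the preceding lemma to supply the local unitary $\bI_X\otimes U_Y$ that maps it to $|\P\>$, thereby transporting the decomposition. Your write-up merely makes explicit what the paper leaves as a one-line remark (the explicit form of $|\P'\>$, the check $\Tr_Y|\P'\>\<\P'|=\r_X$, and the fact that the orthonormal steered family lives on $Y$), so there is no substantive difference in approach.
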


This theorem shows that an observer at $Y$ can choose the state of $X$ by performing a measurement and sending the result to an observer at $X$, hence the total system is non-local.

\subsection{Bell inequality violation with bipartite two fermions}\label{bi}

As another exemplary phenomenon of nonlocality that arises from the entanglement of identical particles, we discuss the maximal violation of the Clauser-Horne-Shimony-Holt (CHSH) inequality~\cite{clauser1969proposed}, a standard example of Bell inequalities (BI).
Even though several types of entangled states can generate the violation of the CHSH inequality, here we focus on the bipartite two-level fermionic system and show that a superposition of the vacuum and two fermions can violate the inequality. 
As already mentioned, the factorizability of bipartite systems is used for the manifest verification of the relations.

To discuss BI including the CHSH inequality, we assume the independence of two systems $X$ and $Y$ and a local hidden variable $\la$ ($\in \La$) that determines the probabilities for the subsystems to output data. Thus for any datum $a_x$ from  any measurement operation $x$ of $X$, we obtain a probability function $f_X(a_x,\la)$. By denoting the probablity measure on $\La$ as $M$, the probability distribution is computed as  $P(a_x,b_y|x,y) = \int M(d\la) f_X(a_x,\la)f_Y(b_y,\la)$.   

For the case of CHSH inequality, $a_x$ and $b_y$ have two valued observables, i.e.,  $+1$ and $-1$. Then the mean values of $a_x$ and $b_y$ for a given $\la$ is denoted as $\<a_x(\la)\> = f_X(+1_x,\la) -f_X(-1_x,\la)$ and $\<b(\la)\> = f_Y(+1_x,\la) -f_Y(-1_x,\la)$ ($-1\le\<a_x(\la)\>\le 1$ and $-1\le \<b_y(\la)\>\le 1$). The correlation function is written as $\<a_x,b_y \>= \int M( d\la) \<a_x\>\<b_y\>\r$.
We consider the correlation bound for the two different setting for each detector, i.e, $x=1,2$ and $y=1,2$. The CHSH inequality can be 
derived using the inequality $\<a_{\t_1\t_2}\> \equiv \frac{1}{4}\<(1 +\t_1 a_1)(1 +\t_2 a_2)\>\ge 0$ ($\t_i\in \{+,-\}$) as
\begin{align}
 E(a,b)&\equiv \big|\<a_1,b_1\>+\<a_2,b_1\> + \<a_1,b_2\> -\< a_2,b_2\>\big| \nn \\
 &= \big| \<a_1+a_2, b_1)\> + \<a_1-a_2,b_2\> \big|  \nn \\
&= 2\big| \<a_{++},b_1\>-\<a_{--},b_1\>  + \<a_{+-},b_1\>-\<a_{-+},b_2\> \big| \nn \\
&\le 2\<a_{++}+a_{--} +a_{+-}+a_{-+}, \bI\> =2, 
\end{align} where the last inequality is from the relation $|\<a_{\t_1\t_2},b_j\>| \le \<a_{\t_1\t_2},\mathbb{I}\>$.

%\begin{theorem}
%     If a bipartite system of identical particles is classical and has a local hidden variable $\la$, there exists an inequality for the correlations of a given system:
%\begin{align}
% E(a,b)\equiv \big|\<a_1,b_1\>+\<a_2,b_1\> + \<a_1,b_2\> -\< a_2,b_2\>\big| \le 2   
%\end{align} where $(a_1,a_2)$ and $(b_1,b_2)$ are different settings for detectors at the subsystems $X$ and $Y$.     
%\end{theorem}

For the existence of quantum correlation with Hermitian observables, one can show that the Bell inequality is maximally violated when $ E(a,b) =2\sqrt{2}$ (see, e.g., Ref.\cite{summers1987bell}). This maximal violation can be achieved in the bipartite spin half fermionic system when the fermion state is given by
\begin{align}\label{bellstate}
  |\P^{even}_-\> =& \frac{1}{\sqrt{2}}\big(|vac\>^X\otimes|\uparrow,\downarrow\>^Y -|\uparrow,\downarrow\>^X \otimes|vac\>^Y\big).  
\end{align}
The above state can be obtained from a two fermion state prepared in a system $Z$ 
\begin{align}
    |\P\> &= |\uparrow,\downarrow\>^Z,
\end{align}
which evolves so that the fermions arrive at the systems $X$ and $Y$ in the following form,
\begin{align}
    |\P\>
    &= |\p_1,\uparrow\>\wedge|\p_2,\downarrow\>
\end{align}
with $\p_1 = \frac{1}{\sqrt{2}} (X -Y)$ and  $\p_2 = \frac{1}{\sqrt{2}} (X +Y)$. From PSSR, we can obtain an even-parity fermionic state Eq.~\eqref{bellstate} with probability $1/2$.

Considering that $|vac\>$ and $|\uparrow,\downarrow\>$ are the only two possible independent states per subsystem in this setup (note that the antisymmetric state $|\uparrow,\downarrow\>$ is invariant under any unitary operation), $|\P_-^{even}\>$ is one of fermionic \emph{Bell-like states}.
In this basis,
we can construct three Pauli matrices as follows:
\begin{align}
 \s_1=
 \begin{pmatrix}
 |vac\>\<\uparrow,\downarrow| +|\uparrow,\downarrow\>\<vac|
 \end{pmatrix}, \quad 
 \s_2=
 \begin{pmatrix}
 -i|vac\>\<\uparrow,\downarrow| +i|\uparrow,\downarrow\>\<vac|
 \end{pmatrix}, \quad 
 \s_3=
 \begin{pmatrix}
 |vac\>\<vac| -
|\uparrow,\downarrow\>\<\uparrow,\downarrow|
 \end{pmatrix},
\end{align}
and $\vec{\s}\cdot\hat{n} = \sum_{j=1}^3\s_j\hat{n}_j$ for an arbitrary three-dimensional unit vector $\hat{n}$. 

Then, by setting
\begin{align}
a_1= (\vec{\s}\cdot\hat{n})_X\otimes \bI_Y, \quad a_2= (\vec{\s}\cdot\hat{m})_X\otimes\bI_Y,\quad 
b_1= \bI_X\otimes(\vec{\s}\cdot\hat{n}')_Y, \quad b_2= \bI_X\otimes(\vec{\s}\cdot\hat{m}')_Y
\end{align}
($\otimes$ comes from Theorem 1)
so that the unit vectors $(\vec{n},\vec{m},\vec{n}',\vec{m}')$ satisfy  $\hat{n}\cdot \hat{n}' = \hat{m}\cdot \hat{n}'=\hat{m}\cdot \hat{m}'=-\hat{n}\cdot\hat{m}' =\frac{1}{\sqrt{2}}$,
the maximal Bell inequality violation is obtained, i.e., 
\begin{align}
    |\<\P^{even}_-|(a_1b_1+a_2b_1 + a_2b_2 - a_1b_2)|\P^{even}_-\>|              = 2\sqrt{2}. 
\end{align}

%%%%%%%%%%%%%%%%%%%%%%%%%%%%%%%%%%%%%%%%%%%%%%%%%%%%%%%%%%%%%%%%%%%%%%%%%%%%%%%%%%%%%%%%%%%%%%%%%%%%%%%%%%%%%%%%%%%%%

\section{DISCUSSIONS}\label{discussions}

By employing SEA and microcausality,
we have suggested a theoretically rigorous method to quantify any type of identical particles' entanglement, which corrects the algebraic relation for the definition of partial trace in the no-labeling approach (NLA). In this formalism, the total Hilbert space can be factorized according to the location of the particles. In addition, some non-local properties that are seemingly hard to quantify with identical particles, such as the GHJW theorem and BI violation, are handily analyzed. 

Possible applications of our current work are diverse. 
For example, Ref.~\cite{barros2019entangling} theoretically and experimentally verified the quantitative relation of identical particle's entanglement to particle indistinguishability and spatial overlap, in which the partial trace technique based on SEA is used. We expect similar experiments with a larger number of bosons or fermions are possible. It is also an intriguing development to establish a rigorous quantum resource theory of identical particles (see Ref.~\cite{morris2019entanglement} for  related research for the bosonic case) and apply it to more general field-theoretic systems.

\section*{Acknowledgements}
SC is supported by the National Research Foundation of Korea (NRF, NRF-2019R1I1A1A01059964). JC is supported by the Korea Ministry of Trade, Industry and Energy (MOTIE) under Grant 10008040.

\appendix
\numberwithin{equation}{section}

\section{Particle entanglement and mode entanglement}\label{particletomode}

The exchange symmetry and the Hilbert space non-factorizability of identical particles raise the question of specifying subsystems. According to the elements of subsystems that compose the total system, the entanglement of identical particles can be the \emph{particle entanglement} or the \emph{mode entanglement}.  The imposition of different elements to subsystems corresponds to different quantification of entanglement. 
Here we explain the concepts of particle entanglement and mode entanglement and how they are related at the level of detectors.

\subsection{Particle entanglements}\label{particle}

Particle entanglement identifies particles as subsystems. Since each particle is considered a subsystem, this definition implies that \emph{the total system preserves the particle number $N$, with $N$ = (number of particles) = (number of subsystems)}. Here we suppose there are $N$ identical particles in a pure state $|\P_1\>\otimes_\pm |\P_2\>\otimes_\pm \cdots\otimes_\pm |\P_N\>$ (Eqs. \eqref{bosons} and \eqref{fermion}).

What can we say about the entanglement of this state? First of all, one can consider the superposition of the particles originated from the exchange symmetry as an entanglement (Ref.~\cite{killoran2014extracting, cavalcanti2007useful, morris2019entanglement}).
For example, when $N=2$, the identical particle state $ |\P_1\>\otimes_\pm|\P_2\> $ is written in 1QL by
\begin{align}
\label{2particlestate}
    |\P_1\>\otimes_\pm|\P_2\> =\frac{1}{\sqrt{2}} \big(|\P_1\>_A\otimes |\P_2\>_B \pm |\P_2\>_A\otimes |\P_1\>_B\big). 
\end{align} Here $A$ and $B$ are particle labels, which we explicitly write to clarify the particle subsystems. If $A$ and $B$ are considered subsystems, Eq.~\eqref{2particlestate} is an entangled state because it cannot be expressed as a tensor product wave function.
Since no physical detector can address individual particles ($A$ and $B$ are hence called ``pseudolables''), this type of particle entanglement is usually considered artificial entanglement, just dependent on the mathematical form to express identical particles. However, 
Ref.~\cite{killoran2014extracting, cavalcanti2007useful, morris2019entanglement} suggested some protocols to extract this mathematical entanglement into detectable subsystems. These results show that the particle identity is a kind of quantum resource that can be transferred to the mode entanglement.

Nevertheless, it is still true that 
one can discuss the actual entanglement only after discarding this artificial entanglement. Ghiradi et al.~\cite{ghirardi2004general} proposed the concept of \emph{Slater number} for such a discrimination.
According to this criterion, a state is not entangled when it can be expressed as the (anti-) symmetric form under the particle label exchange. For example, a state of Eq.~\eqref{2particlestate} is separable because it is totally (anti-) symmetric. One can directly see that SEA reveals such a property very clearly since every state expressed in SEA inherently discards the superposition of wavefunctions from the exchange symmetry.

\subsection{Mode entanglements}\label{mode}

On the other hand, mode entanglement identifies spatial modes as subsystems. Orthogonal states that compose bases of the local Hilbert spaces is described by the particle number and the possible internal degrees of freedom. 2QL is usually suitable for describing this type of entanglement. Supppose $N$ particles can be found in two independent spatial modes $X$ and $Y$ with no internal degree of freedom. Then a pure state $|(N-n)_X,n_Y\>$ is mode-separable while $\frac{1}{\sqrt{2}}(|(N-n)_X,n_Y\> + |(N-m)_X,m_Y\>)$ ($n\neq m$) is mode-entangled. It should be noted that the criterion for the separability of modes changes when the identical particles follow superselection rules~\cite{wiseman2003entanglement,schuch2004nonlocal,schuch2004quantum,gigena2015entanglement,gigena2017bipartite}.

The mode entanglement is physically extractable entanglement because detectors have access to each mode that is a distinguishable subsystem~\cite{dalton2017quantum}. Moreover, it is proper to state that all the possible genuine entanglements that physical observers can extract are mode entanglements. 

However, the definition of mode entanglement gives rise to a puzzle when single-particle states are considered. 
Let us suppose that a particle can be found in two modes $X$ and $Y$. Then a state
\begin{align}
    \frac{1}{\sqrt{2}}(\ha^\dagger_X|vac\> + \ha\dagger_Y|vac\>) = \frac{1}{\sqrt{2}}(|1_X,0_Y\> +|0_X,1_Y\>), 
\end{align} is mode-entangled, while it is not particle-entangled since it can be written in SEA and 1QL as
\begin{align}
    \frac{1}{\sqrt{2}}(|X\> + |Y\>).
\end{align} Ref.~\cite{wiseman2003entanglement} proposes a method to overcome this confusion with the imposition of the superselection rule to states.

\subsection{Conversion of particle entanglement to mode entanglement}\label{ptom}

An essential property of particle entanglement explained in Sec.~\ref{particle} is that it entirely depends on the formal structure of wave functions, by which
the authors of Ref.~\cite{tichy2013entanglement} called it ``a priori entanglement''. It is pointed out in Ref that this criterion is valid only when each particle is unambiguously assigned to one of detectors, i.e., a particle in $\P_1$ is always observed by the detector $L$ and the other in $\P_2$  by the detectors $R$. 

On the other hand, if $\P_1$ and $\P_2$ are spatially ambiguous, Eq.~\eqref{2particlestate} is no more a definitely separable state. Consider the case when the particles can be observed at both detectors, which is mathematically described as 
\begin{align}
    |\P_i\> = |\p_i,s_i\> = r_i|R, s_i\> +l_i|L,s_i\> \qquad (i=1,2)
\end{align} where $r_i$ and $l_i$ are complex numbers that satisfy $|r_i|^2+|l_i|^2 =1$. The above relation  is determined by the relation of particles to detectors (spatial modes), which can be quantified as the spatial coherence~\cite{chin2019entanglement}. 
Now the two identical particles that is actually detected are in the form
\begin{align}
     r_1r_2|R,s_1\>\otimes_{\pm}|R,s_2\> +  +l_1l_2|L,s_1\>\otimes_{\pm}|L,s_2\> 
    +  r_1l_2|R,s_1\>\otimes_{\pm}|L,s_2\> +  l_1r_2|L,s_1\>\otimes_{\pm}|R,s_2\>,
\end{align}
which is an entangled state at the level of detectors (or modes)~\cite{franco2018indistinguishability,chin2019entanglement,tichy2013entanglement, barros2019entangling}. This discussion shows that the nonlocality of identical particles cannot be read off just by looking into the wave functions. 
Particle identity and spatial coherence combine to generate genuine entanglement, and the final entanglement is obtained in the form of mode entanglement. 
%One can also see that \emph{SEA formalism provides an appropriate description of how particle entanglement  converts to  mode entanglement.}

%\begin{figure}[t]
%	\centering
%	\includegraphics[width=8cm]{bf}
%\caption{We analyze the entanglement of arbitrary $N$ identical particles that can be extracted by  detectors that locate at the subsystems (modes). For the case of bosons (the left diagram), all the states that have the NSSR-preserving entanglement are bosonic states with a fixed particle number. For the case of fermions (the right diagram), this inclusion relation is not applicable. We treat fermionic states that preserve both the particle number and PSSR.}
%\label{bf}
%\end{figure}

\section{NSSR-preserving entanglement of bosons}\label{NSSR}

Suppose that there exist two systems $X$ and $Y$ that locate far from each other and have never exchanged any information, therefore separated. Over $X$ spread $n$ identical bosons and over $Y$ spread $(N-n)$ identical bosons. Each boson has an internal degree of freedom $s_i$ with $i=1,\cdots, S$.  
Then, in the SEA formalism, a separable $N$ boson state is written in the most general form as
\begin{align}\label{bisep.}
|\P_N^{sep}\> =& (\sum_a\p_{X}^a|X,s^a_1\> \vee \cdots \vee |X,s^a_n\>)\vee (\sum_b\p_Y^b|Y,s^b_{n+1}\> \vee \cdots \vee |Y,s^b_N\>) \nn \\
\equiv& |\P^X_n\> \vee |\P^Y_{N-n}\>, 
\end{align}
 where $\p_X^a$ and $\p_Y^b$ are complex numbers for the wave function normalization.
One can see that $|\P_N^{sep}\>$ is separable with respect to the systems $X$  and $Y$, because $|\P^X_n\>$ and $|\P^Y_{N-n}\>$ can be prepared in each system independently. Indeed, using Definition 2, $|\P_N^{sep}\>$ is prepared with creation operators as
\begin{align}\label{bosonsep}
 |\P_N^{sep}\> 
 &= \Big(\sum_{a}\p_X^a\ha^\dagger(X,s_1^a)\cdots \ha^\dagger(X,s_n^a)\Big) \Big(\sum_b\p_Y^b\ha^\dagger(Y,s_{n+1}^b)\cdots \ha^\dagger(Y,s_N^b)\Big)|vac\> \nn \\
 &\equiv  \ha^\dagger \big(\P^X_n\big)\ha^\dagger\big(\P^Y_{N-n}\big)|vac\>,
\end{align} 
 where  $[\ha^\dagger \big(\P^X_n\big), \ha^\dagger\big(\P^Y_{N-n}\big)]=0$.

\begin{figure}[t]
	\centering
	\includegraphics[width=8.5cm]{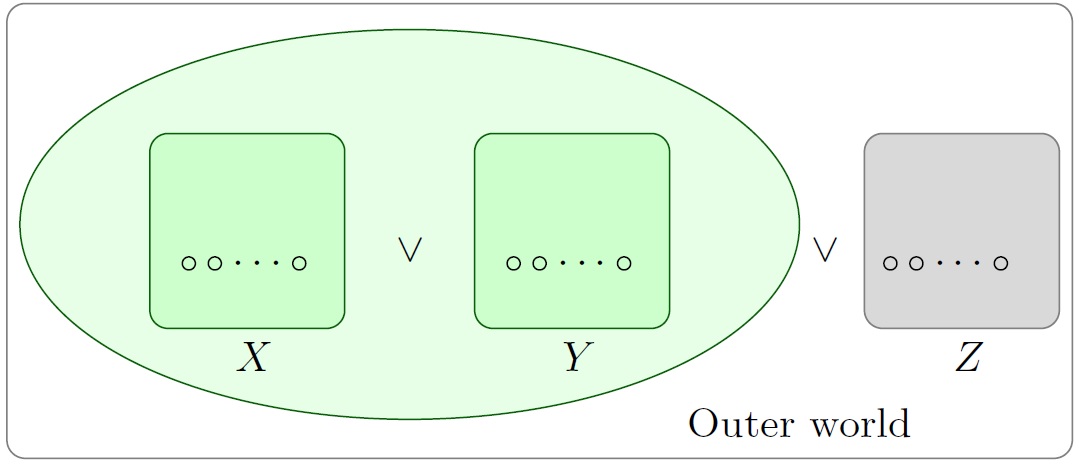}
\caption{An example of a tripartite system with bosons. While the subsystems $X$ and $Y$ have the potential to be correlated to each other, $Z$ is insulated to the other subsystems (included in the outer world). In this case, we only need to examine the nonlocality between two subsystems $X$ and $Y$, and the total system is considered $X\cup Y$. The relation of $X \cup Y$ with $Z$ is represented as the symmetric product $\vee$, which we do not care about as long as no history of physical interaction between $X\cup Y$ and $Z$ exists.}
\label{tripartite}
\end{figure}

This expression is powerful when we consider the third system that has $l$ identical bosons in a state $|\P^Z_l\>$ with no interaction to $X$ and $Y$ (Fig.~\ref{tripartite}). Even if the total state then can be rewritten as $|\P^X_{n}\>\vee |\P^Y_{N-n}\>\vee |\P^Z_l\>$,  two communicators in $X$ and $Y$ do not need to take $|\P^Z_{l}\>$ into account to evaluate the non-locality of them. Hence, with this $\vee$ notation (or $\wedge$ notation for fermions) one can treat any multipartite system of identical particles similar to the distinguishable particle case.

By extending the above discussion, the most general statement for the separable states of $N$ boson in $P$ subsystems is possible. A set of $N$ bosons that spreads over $P$ subsystems $X_i$ ($i=1,\cdots, P$) are separable if and only if the total state is given by
\begin{align}\label{separability_boson}
    |\P\> = \vee_{i=1}^{P} (\sum_{a_i}\p_{X_i}^{a_i}|X_i, s^{a_i}_1\>\vee\cdots \vee |X_i,s^{a_i}_{n_i}\>)
\end{align} where $\sum_{i=1}^P n_i = N$.
Our separability condition can be considered the generalization of that introduced in Ref.~\cite{wiseman2003entanglement}.

Now we  apply the bipartite separability condition Eq.~\eqref{bisep.} to the case when $N$ boson spread over space including two distinguishable detectors $X$ and $Y$ (Fig.~\ref{passiveop}), 
\begin{align}
    |\P\> =& |\P_1\>\vee |\P_2\>\vee \cdots\vee |\P_N\> \nn \\
    =& \vee_{i=1}^N(r_i|X,s_i\>+l_i|Y,s_i\>)
\end{align} where $\P_i =(\p_i,s_i) = (r_iX+l_iY,s_i )$. This state is separable when it can be written as
\begin{align}
    |\P\> = \sum_{n=0}^N |\P_n^X\>\vee|\P_{N-n}^Y\>
\end{align} ($|\P_n^X\>$  is a $n$-boson state  in $X$ and $|\P_(N-n)^Y\>$ is a $N-n$ boson state in $Y$), for $|\P_n^X\>\vee|\P_{N-n}^Y\>$ with different $n$ cannot superpose with each other. 

It is quite straightforward to define several entanglement measures for bipartite bosonic states that vanish when the states are separable. 
Here, we present the definition of entanglement entropy as an example.\\

\noindent
\paragraph*{Entanglement entropy.} The  entropy of a bipartite system that consists of $X$ and $Y$ can be defined with the \emph{symmetrized partial trace technic}~\cite{franco2016quantum,chin2019entanglement,chin2019reduced}. Suppose that a subsystem $X$ with $n$ bosons has a complete orthonormal basis set $\{|X,s_1^a\>\vee\cdots\vee|X, s_{n}^a\>\}_{a} \} \equiv \{|(s_1,\cdots,s_n)^{a}\>^X\}_a$ and the identity matrix is given by $\bI_X= \sum_{a}|(s_1,\cdots,  s_n)^a\>\<(s_1,\cdots,  s_n)^a|^X$. Then the reduced density matrix $\r_Y^n$ of $Y$ with respect to a total state $|\P\>$ is derived from Definition~\ref{partialsea} as
\begin{align}\label{reduced}
 \r_Y^{n} =& \Tr_X(\bI_X|\P\>\<\P|) \nn \\
% =& \sum_{a}\<s_1^a|\vee\cdots\vee \<s_{N}^a|^X\cdot|\P\>\<\P| \cdot|s_1^a\>\vee \cdots\vee |s_{N}^a\>^X \nn \\
 =& \sum_a \ha\Big((s_1,\cdots,  s_n)^a\Big)|\P\>\<\P|\ha^\dagger \Big((s_1,\cdots,  s_n)^a\Big)
\end{align}
and the entropy is given by 
\begin{align}
    E(|\P\>) =\sum_{n=1}^{N-1}P(\r_Y^n)E(\r_Y^{n}) = 
     -\sum_{n=1}^{N-1}P(\r_Y^n)\Tr(\r_Y^{n}\ln\r_Y^{n})
\end{align} where $P(\r_Y^n)$ is the probability for $\r_Y^n$ to be observed. 
Ref.~\cite{chin2019reduced} connects the symmetrized partial trace to the subalgebra restriction~\cite{balachandran2013entanglement,balachandran2013algebraic} in algebraic quantum mechanics.

%\paragraph*{NSSR-preserving entanglement of $N$-bosons.}

\begin{figure}[t]
    \centering
    \includegraphics[width=6.2cm]{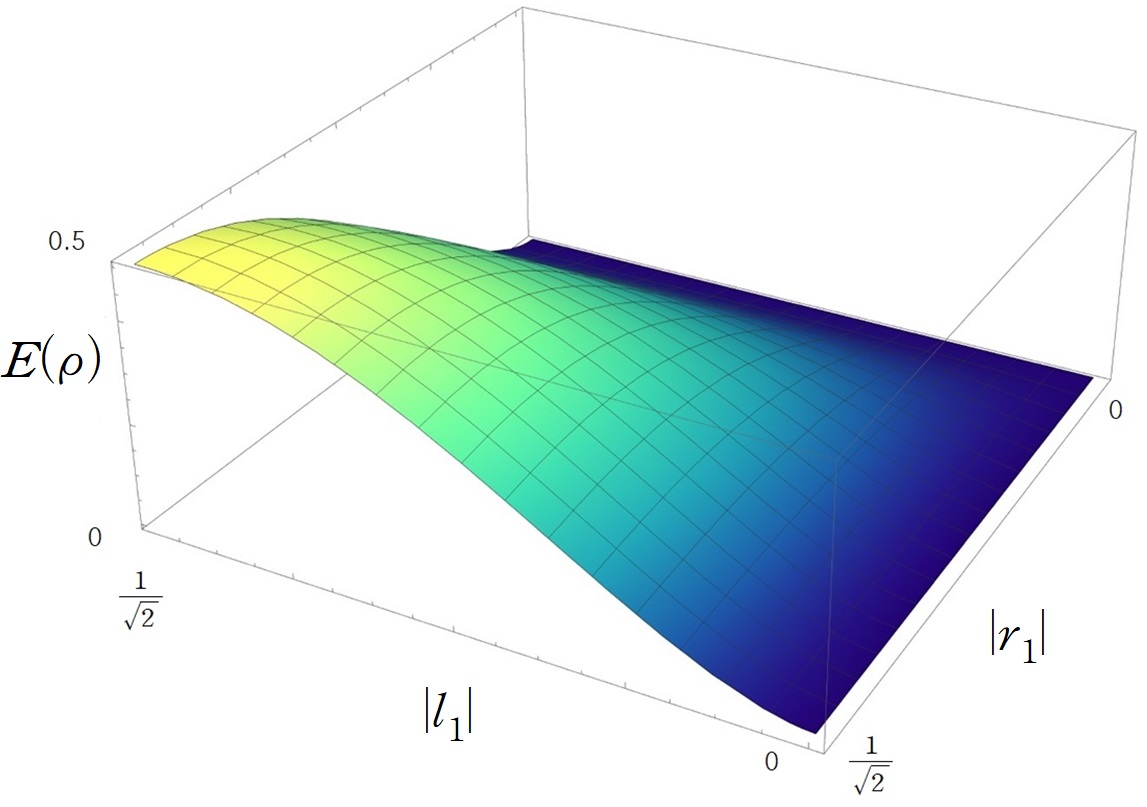}
    \caption{Entanglement entropy of $2$ bosons according to the variation of spatial coherence. The entropy is zero when one of $(r_1,r_2,l_1,l_2)$ is zero. The maximal $E(\r_{ent})$ is given when $|r_1|=|r_2|=|l_1|=|l_2|=\frac{1}{\sqrt{2}}$ by $1/2$. NSSR restricts the possible states to superpose to each other, which diminishes the maximal entropy to $1/2$.}
    \label{entropybs}
\end{figure}

As a simple example, we compute the entropy of two bosons with internal states $\uparrow$ and $\downarrow$ respectively. Then from Eq.~\eqref{separability_boson}, the state is given by
\begin{align}
 |\P\> =   \big( r_1r_2|\uparrow,\downarrow\>^X +l_1l_2|\uparrow,\downarrow\>^Y \big)+ \big( r_1l_2|\uparrow\>^X\vee|\downarrow\>^Y +l_1r_2|\downarrow\>^X\vee|\uparrow\>^Y\big).
\end{align} 
Considering the case when each detector observes one particle, $\bI_X$ is given by $\bI_X =\sum_{r,s=\uparrow,\downarrow} |r,s\>\<r,s|^X$ and  the reduced density matrix $\r_Y$ becomes
 \begin{align}
 \r_Y =& \frac{1}{|r_1l_2|^2+|r_2l_1|^2}\big(|r_1l_2|^2 |\downarrow\>\<\downarrow|^Y + |l_1r_2|^2 |\uparrow\>\<\uparrow|^Y \big)
\end{align} with probability $(|r_1l_2|^2 + |l_1r_2|^2)$. 
Hence, the entanglement entropy for $|\P\>$ is given by
\begin{align}\label{bsent}
    E(|\P\>) 
    =& (|r_1l_2|^2 + |l_1r_2|^2) E(\r_Y) \nn \\
    =&-|r_1l_2|^2\log\Big[\frac{|r_1l_2|^2}{|r_1l_2|^2 + |l_1r_2|^2}\Big]  -|l_1r_2|^2\log\Big[\frac{|l_1r_2|^2}{|r_1l_2|^2 + |l_1r_2|^2}\Big].
\end{align}
The state is unentangled when one of $(r_1,r_2,l_1,l_2)$ is zero. The maximal $E(\r_{ent})$ is given when $|r_1|=|r_2|=|l_1|=|l_2|=\frac{1}{\sqrt{2}}$ by $1/2$ (Fig.~\ref{entropybs}). The derivation of Eq.~\eqref{bsent} is given in the former works, e.g., Ref~\cite{franco2018indistinguishability,chin2019entanglement}, however we here reproduce it for the comparison with the fermionic case in Section~\ref{ferm}.

\bibliographystyle{unsrt}
\bibliography{Taming}

\end{document}